\documentclass[12pt]{article}

\usepackage[english]{babel}

\usepackage[onehalfspacing]{setspace}

\usepackage{amssymb}
\usepackage{dsfont}
\usepackage{amsmath}
\usepackage{amsthm}
\usepackage{siunitx}
\PassOptionsToPackage{hyphens}{url}\usepackage{hyperref}
\usepackage{cleveref}
\usepackage[utf8]{inputenc}
\usepackage[right]{lineno}
\usepackage{csquotes}
\usepackage{booktabs}
\usepackage{longtable}
\usepackage{adjustbox}
\usepackage{array}
\usepackage{url}
\usepackage{titlesec}
\usepackage{graphicx}
\usepackage{float}
\usepackage{wrapfig}
\usepackage{subcaption}
\usepackage{tabularx}
\usepackage{authblk}
\usepackage{xcolor} 

\titleformat{\subsection}
  {\mdseries\large} 
  {\thesubsection}{1em}{} 

\newtheorem{lem}{Lemma}
\newtheorem{definition}{Definition}

\newtheorem{corollary}{Corollary}

\newtheorem{theorem}{Theorem}
\newtheorem{example}{Example}

\usepackage[english]{babel}
\usepackage[style=authoryear,backend=biber,natbib=true,maxcitenames=2,uniquelist=false,maxbibnames=5]{biblatex}
\DeclareNameAlias{sortname}{family-given}
\DeclareNameAlias{default}{family-given}

\renewbibmacro{in:}{}
\DeclareFieldFormat[article]{title}{\mkbibquote{#1}\addcomma}
\DeclareFieldFormat[book]{title}{\mkbibemph{#1}\addcomma}
\DeclareFieldFormat[bookinbook]{title}{\mkbibemph{#1}\addcomma}
\DeclareFieldFormat[inbook]{title}{\mkbibquote{#1}\addcomma}
\DeclareFieldFormat[incollection]{title}{\mkbibquote{#1}\addcomma}
\DeclareFieldFormat[inproceedings]{title}{\mkbibquote{#1}\addcomma}
\DeclareFieldFormat[manual]{title}{\mkbibemph{#1}\addcomma}
\DeclareFieldFormat[misc]{title}{\mkbibemph{#1}\addcomma}
\DeclareFieldFormat[thesis]{title}{\mkbibemph{#1}\addcomma}
\DeclareFieldFormat[unpublished]{title}{\mkbibquote{#1}\addcomma}
\DeclareFieldFormat[patent]{title}{\mkbibemph{#1}\addcomma}
\DeclareFieldFormat[report]{title}{\mkbibemph{#1}\addcomma}
\DeclareFieldFormat[online]{title}{\mkbibquote{#1}\addcomma}
\DeclareFieldFormat[software]{title}{\mkbibemph{#1}\addcomma}
\DeclareFieldFormat[booklet]{title}{\mkbibemph{#1}\addcomma}
\DeclareFieldFormat[periodical]{title}{\mkbibemph{#1}\addcomma}
\DeclareFieldFormat[standard]{title}{\mkbibemph{#1}\addcomma}

\DeclareFieldFormat[article]{journaltitle}{\iffieldundef{shortjournal}{\mkbibemph{#1}\addcomma}{\mkbibemph{\printfield{shortjournal}}\addcomma}}
\DeclareFieldFormat{volume}{\bibstring{volume}~#1}
\DeclareFieldFormat{number}{\bibstring{number}~#1}

\DefineBibliographyStrings{english}{
  volume = {Vol.},
  number = {No.}
}

\renewbibmacro*{volume+number+eid}{%
  \printfield{volume}%
  \setunit*{\addspace}%
  \printfield{number}%
  \setunit{\addcomma\space}%
  \printfield{eid}}

\renewbibmacro*{journal+issuetitle}{%
  \usebibmacro{journal}%
  \setunit*{\addcomma\space}%
  \usebibmacro{volume+number+eid}%
  \setunit{\addcomma\space}%
  \usebibmacro{issue+date}}

\renewbibmacro*{publisher+location+date}{%
  \printlist{publisher}%
  \iflistundef{location}
    {\setunit*{\addcomma\space}}
    {\setunit*{\addcolon\space}}%
  \printlist{location}%
  \setunit*{\addcomma\space}%
  \usebibmacro{date}}

\DeclareCiteCommand{\cite}[\mkbibparens]
  {\usebibmacro{prenote}}
  {\usebibmacro{citeindex}%
   \usebibmacro{cite}}
  {\multicitedelim}
  {\usebibmacro{postnote}}

\renewbibmacro*{cite:labelyear+extrayear}{%
  \iffieldundef{labelyear}
    {}
    {\printtext[bibhyperref]{%
       \printfield{labelyear}%
       \printfield{extrayear}}}}

\renewbibmacro*{cite:labeldate+extradate}{%
  \iffieldundef{labelyear}
    {}
    {\printtext[bibhyperref]{%
       \printfield{labelyear}%
       \printfield{extradate}}}}

\AtEveryBibitem{
  \clearfield{month}
  \clearfield{day}
  \ifentrytype{book}{
    \clearlist{location}
  }{}
}

\DefineBibliographyStrings{english}{
  andothers = {\text{et al.},}
}

\DeclareFieldFormat[article]{volume}{\bibstring{jourvol}\addnbspace #1}
\DeclareFieldFormat[article]{number}{\bibstring{number}\addnbspace #1}
\DeclareFieldFormat[article]{volume}{Vol. #1}
\DeclareFieldFormat[article]{number}{No. #1}

\DeclareFieldFormat{urldate}{\mkbibparens{accessed \addspace#1}}

\DeclareFieldFormat{urldate}{%
  \mkbibparens{accessed\space%
    \thefield{urlday}\addspace%
    \mkbibmonth{\thefield{urlmonth}}\addspace%
    \thefield{urlyear}}}

\crefformat{figure}{#2Figure~#1#3}
\Crefformat{figure}{#2Figure~#1#3}
\crefformat{table}{#2Table~#1#3}
\Crefformat{table}{#2Table~#1#3}
\crefformat{section}{#2Section~#1#3}
\Crefformat{section}{#2Section~#1#3}

\author[1]{Jonathan Klinge}
\author[2]{Maren Diane Schmeck}
\affil[1,2]{Center for Mathematical Economics, Bielefeld University}
\title{Asymptotics of Ruin Probabilities in a Subordinated Cramér-Lundberg Model\footnote{Financial support by the German Research Foundation (DFG) [RTG 2865/1 – 492988838] is gratefully acknowledged. \\
The authors thank Alfred Müller, Matthias Scherer and Hanspeter Schmidli for helpful comments and fruitful discussions.}}

\begin{document}
\nocite{}
\maketitle

\begin{abstract}
\noindent
We study a dynamic model of a non-life insurance portfolio. The foundation of the model is a compound Poisson process that represents the claims side of the insurer. To introduce clusters of claims appearing, e.g. with catastrophic events, this process is time-changed by a Lévy subordinator. The subordinator is chosen so that it evolves, on average, at the same speed as calendar time, creating a trade-off between intensity and severity.
We show that such a transformation always has a negative impact on the probability of ruin. Despite the expected total claim amount remaining invariant, it turns out that the probability of ruin as a function of the initial capital falls arbitrarily slowly depending on the choice of the subordinator. 
\\
\\
\textbf{Keywords:} Cramér-Lundberg Model, Ruin-Theory, Subordination, Subexponential Distribution, Regular Variation\\
\textbf{JEL Classification:} G22, G33, Q54
\end{abstract}

\newpage
\maketitle
\noindent
\section{Introduction}
Insurers today face a wide range of risks that need to be modeled appropriately. On the one hand, there are individual risks and associated individual losses, such as vehicle damage covered by comprehensive motor insurance in the case of an accident. On the other hand, there are natural hazards such as hail, storms, or earthquakes. In these cases, the insurer records cumulative losses arising from many individual claims that occur simultaneously (e.g., hail damaging numerous vehicles at once). Due to climate change, natural catastrophes and their financial consequences have gained increasing relevance, making the modeling of cumulative risks more important than ever.
\\
\\
Cumulative losses caused by natural disasters have a substantial impact on the insurer’s probability of ruin, as they occur at a single point in time rather than being spread over the duration of the insurance contract, as is typical for independent individual losses. The magnitude of this risk depends not only on climatic conditions, but also on the insurer’s exposure concentration. Portfolios with high exposure to natural hazards and/or a strong geographical concentration are exposed to a higher cumulative risk.
\\
\\
In this paper, we construct a model that captures the effect of cumulative losses on the probability of ruin. We consider a compound Poisson process describing the insurer’s loss side and introduce a Lévy subordinator that acts as a stochastic time change of this process. This time change allows the base process to “jump” over certain periods, so that what are individual losses in ordinary time may cluster into a large aggregate loss in the time-changed process. The subordinated loss process therefore exhibits random clustering of individual claims. The model extends \citet{selchscherer}, though we focus on a different application.
The theory of subordination of Lévy processes has been introduced and studied in \citet{bochner1949}, \citet{bochner1955} and developed further. Rich sources on the theory of subordination are, for example, \citet{sato1999}, \citet{conttankov}, and \citet{applebaum2009}.
Specifically, subordination of compound Poisson processes is examined in \citet{Di_Crescenzo_2015}, and \citet{sengar2020}.
\\
\\
The economic interpretation of the stochastic time change in our model is straightforward: a large jump of the subordinator corresponds to the occurrence of a major catastrophic event.
To ensure comparability between the subordinated process and the original compound Poisson process, we require the subordinator to run on average at the same speed as calendar time. Consequently, the expected value of the compound Poisson process remains unchanged under subordination. Subordination therefore acts purely as a random distortion of the claim arrival times. Mathematically, time normalization induces a trade-off between jump intensity and jump size such that the expected value remains invariant. In addition, we demonstrate that subordination invariably enlarges the jump size distribution of the compound Poisson process in the sense of stochastic dominance and that, through the clustering of individual jumps, subordination can transform light-tailed jump distributions into heavy-tailed ones. A detailed discussion of heavy-tailed distributions and their subclasses can be found in sources such as \citet{bingham1989}, \citet{fosskorshunov}, and \citet{ekm2013}.
\\
\\
The paper is organized as follows.
First we show that the subordinated compound Poisson process is again a compound Poisson process with modified intensity and jump distribution. We derive conditions under which the modified jump distribution has light tails, allowing the application of classical Cramér-Lundberg arguments. Nevertheless, we demonstrate that the modified jump distribution always dominates the original distribution in the sense of first-order stochastic dominance. In the light-tailed case we prove, via the Adjustment Coefficient, that subordination asymptotically increases the ruin probability.
\\
\\
We then turn to the heavy-tailed case and identify conditions under which the modified jump distribution becomes heavy-tailed. Our analysis focuses on subexponential distributions, in particular those with regularly varying tails, and we derive conditions for the subordinated process to inherit these tail properties.
\\
\\
As mentioned above, several sources address the subordination of Lévy processes, particularly within the actuarial literature. A more recent development in this area is, for example, \citet{ZHANG2020166}, which studies the valuation of joint-life products and models mortality dependence via stochastically dependent subordinators.
\\
Important sources on ruin theory and the Cramér-Lundberg model include, for example, \citet{rolski2009stochastic}, \citet{asmussenalbrecher}, and \citet{schmidli2017}.
\\
Although ruin theory is a very old field, dating back to \citet{lundberg1903approximerad}, it remains an active area of contemporary research. Recent contributions include \citet{ALBRECHER20251}, \citet{LOCAS2025189}, \citet{CHEUNG202384}, \citet{LI202444}, \citet{lindskog2025eliciting} and \citet{zhu2025ruin}. In \citet{ALBRECHER20251}, an optimal dividend problem is discussed under the condition that the probability of ruin remains sufficiently small. \citet{LOCAS2025189}, \citet{CHEUNG202384}, and \citet{LI202444} discuss a weakened concept of ruin, known as Parisian ruin, where ruin only occurs if the surplus process is negative for a sufficiently long period. This concept of ruin is analysed in both the Cramér-Lundberg model and the Sparre Andersen model. \citet{lindskog2025eliciting} analyzes a claims reservation problem with backlog, while \citet{zhu2025ruin} discusses ruin theory under dependency.
On recent research on the Cramér-Lundberg model \citet{AVERHOFF2025103128} should be mentioned. Here, a Cramér-Lundberg model with unknown parameters is considered. \citet{BRAUNSTEINS20231} analyzes a Cramér-Lundberg model in which the number of insureds is not constant, but each customer has a certain entry rate and random length to stay. Papers dealing specifically with Nat-Cat and Cat Bonds include \citet{FAIAS202046}, \citet{COLANERI2021498}, \citet{BARADEL202416}, and \citet{LIU20251}.
\newpage
\section{Subordinated Compound Poisson Process}
\subsection{Model}
We consider the following model. Let $(\Omega,\mathcal{F},(\mathcal{F}_t)_{t\in[0,\infty)},\mathbb{P})$ be a filtered probability space. Let $(N_t)_{t\geq0}$ be a standard Poisson process with intensity $\lambda$. The claim sizes $X_i$ for $i\in\mathbb{N}$, are independent and identically distributed positive random variables.
\\
Consider the compound Poisson process
\begin{equation*}
C_t=\sum_{i=1}^{N_t}{X_i},
\end{equation*}
with Lévy measure $\nu_C$ and distribution $\mu^t:=\mathcal{L}(C_t|\mathbb{P})$ at time $t\geq0$. Additionally we define an independent univariate Lévy subordinator $(\Lambda_t)_t$ on $(\Omega,\mathcal{F},(\mathcal{F}_t)_{t\in[0,\infty)},\mathbb{P})$ with Lévy triplet $(0,\nu_\Lambda,b_\Lambda)$.
Using these ingredients, we define the main object of our analysis, the subordinated compound Poisson process
\begin{equation}\label{defsubordinatedprocess}
    Y_t:=C_{\Lambda_t}=\sum_{i=1}^{N_{\Lambda_t}}X_i,\quad t\geq 0.
\end{equation}
\\
The following graph (Figure \ref{Trajectory subordinated}) provides an economic interpretation of the subordinated process. Due to the stochastic time change and in particular because the subordinator may exhibit jumps, several individual claims get clustered into a single cumulative jump of 
$(Y_t)_{t\geq0}$. Such a jump can be interpreted as a NatCat-type claim, representing a random aggregate of many underlying losses. The dotted lines describe the time jumps caused by the subordinator.
\\
\\
\begin{figure}
\begin{subfigure}[c]{0.32\textwidth}
\includegraphics[width=1\textwidth]{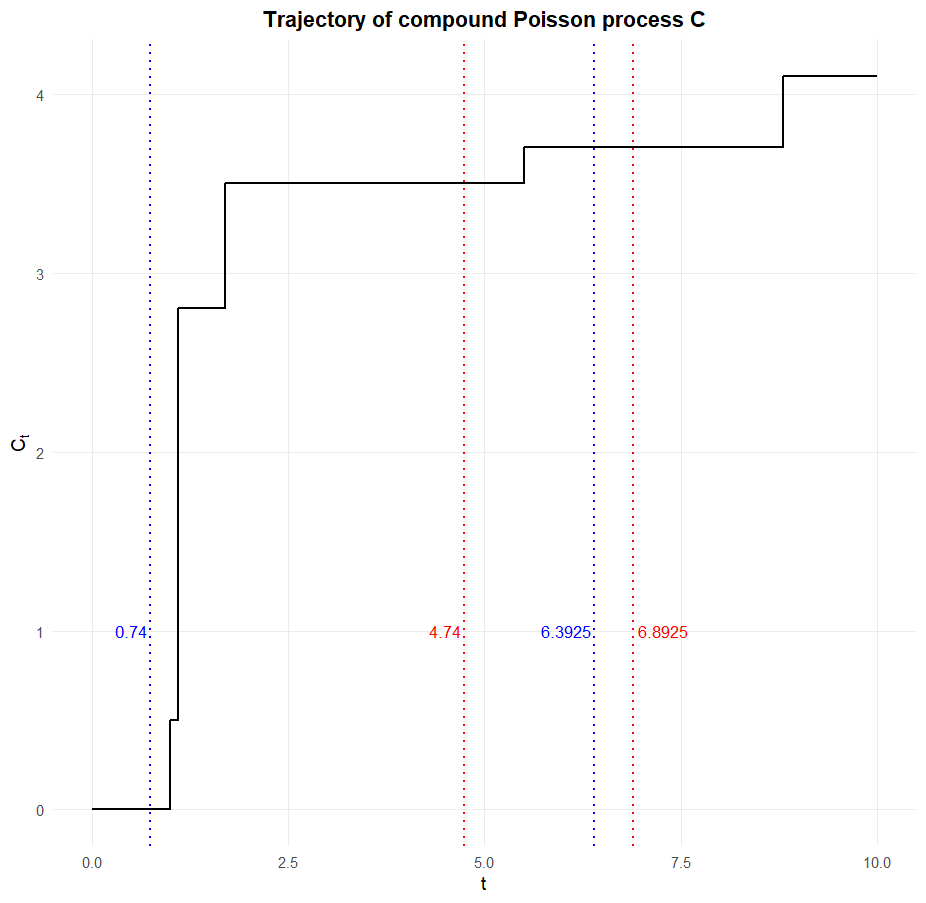}
\end{subfigure}
\begin{subfigure}[c]{0.32\textwidth}
\includegraphics[width=1\textwidth]{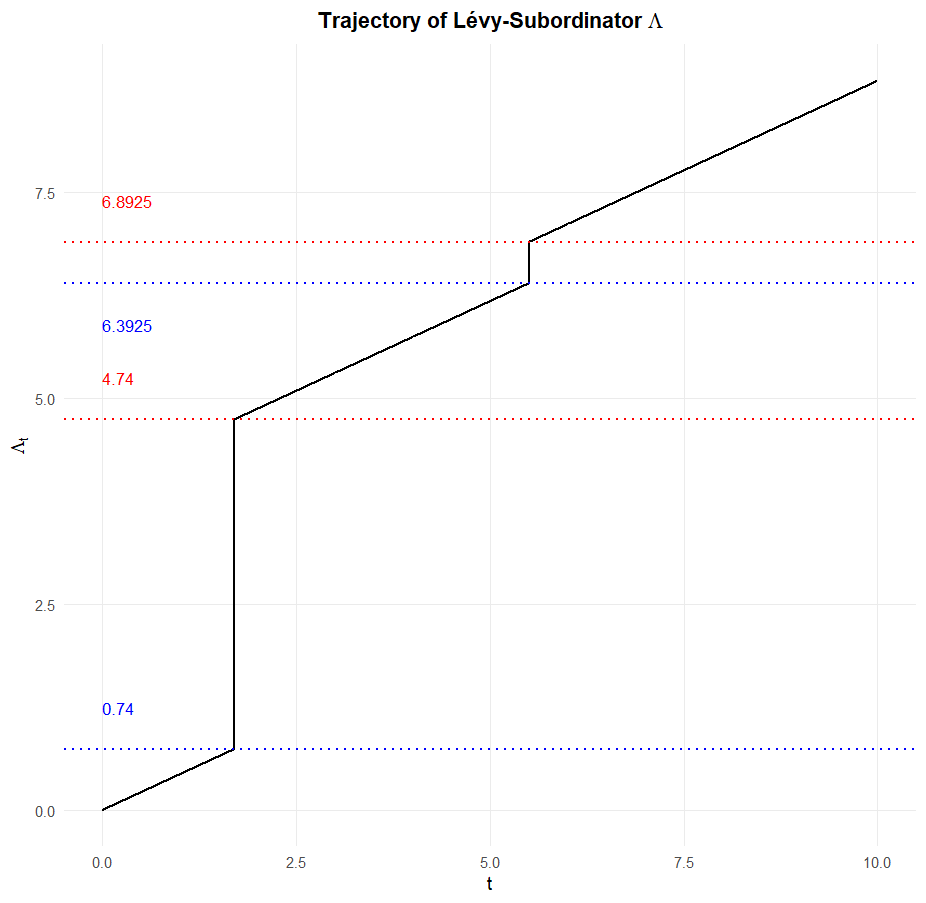}
\end{subfigure}
\begin{subfigure}[c]{0.32\textwidth}
\includegraphics[width=1\textwidth]{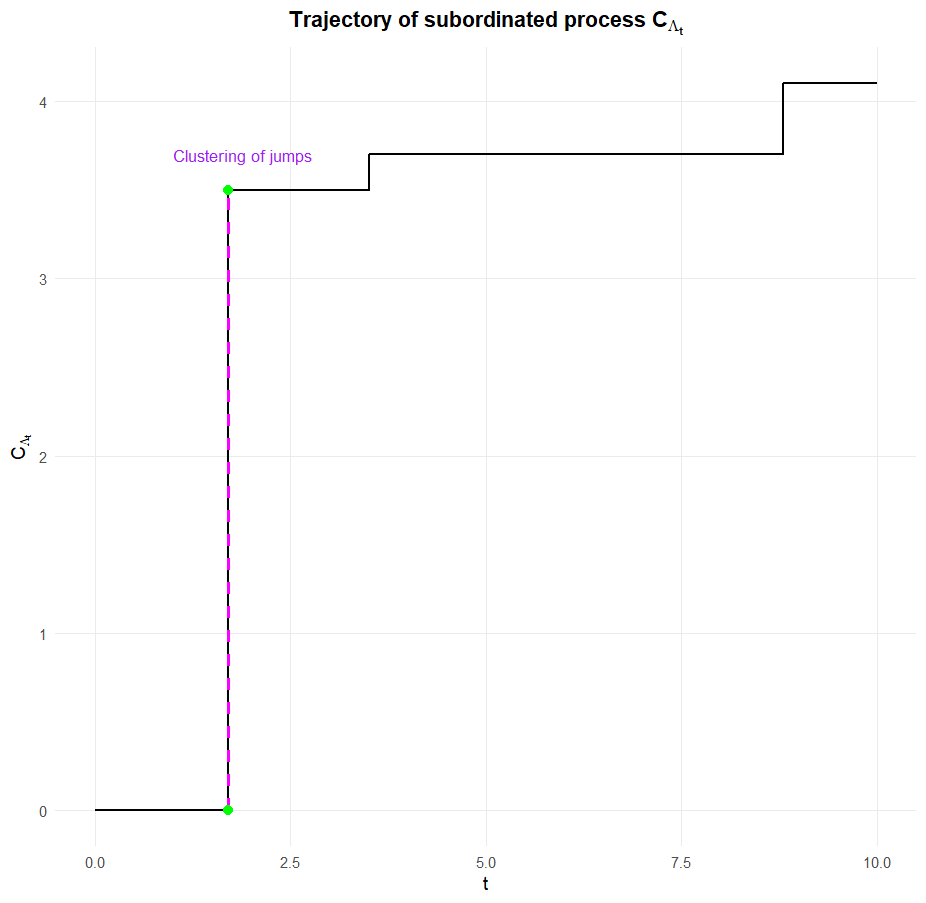}
\end{subfigure}
\caption{Trajectory of a subordinated compound Poisson process, subordinated by drifted
compound Poisson process}
\label{Trajectory subordinated}
\end{figure}
\\
\newpage
\noindent
One should notice that $(Y_t)_{t\geq0}$ is not necessarily adapted to the Filtration $(\mathcal{F}_t)_{t\geq 0}$. We therefore define the filtration $(\mathcal{G}_t)_{t\geq 0}:=(\mathcal{F}_{\Lambda_t})_{t\geq 0}$ to ensure that $Y_t$ is $\mathcal{G}_t$-measurable for every $t\in[0,\infty)$.
\\
\\
In this article, we restrict ourselves to Lévy subordinators that satisfy the time normalization condition $\mathbb{E}[\Lambda_t]=t$ or equivalently $\mathbb{E}[\Lambda_1]=1$. The motivation for this condition is that it ensures that 
\begin{equation*}
\mathbb{E}[Y_t]=\mathbb{E}[C_t],\quad t\geq 0,
\end{equation*}
which follows immediately by conditioning. Therefore, the subordinated model remains directly comparable to the baseline model.
\\
\\
As a first step, we recall the Laplace transform of a Lévy subordinator $\Lambda$. For any Lévy subordinator, it holds that for any $t\geq 0$,
\begin{equation}\label{laplace}
    \mathcal{L}[\Lambda_t](u):=\mathbb{E}[e^{-u\Lambda_t}]=e^{-t\psi_{\Lambda}(u)}, 
\end{equation}
where the Laplace exponent is given by
\begin{equation}\label{laplaceexpo}
    \psi_\Lambda(u)=b_{\Lambda}u+\int_{(0,\infty)}(1-e^{-ux})\nu_\Lambda(dx),\quad u\geq 0,
\end{equation}
for drift $b_\Lambda$ and Lévy measure $\nu_\Lambda$; see for instance \citet{bertoin1998}.
\\
From \citet{sato1999} it is known that $Y$, as defined above, is again a Lévy process. In fact we show in the following, that $Y$ is even a compound Poisson process, with explicitly known intensity and jump size distribution. 
As we were unable to identify a formulation of this result in the literature that matches the compactness required for our purposes, we present it here together with a detailed proof. This result is of fundamental importance for the arguments developed in this paper.
\begin{theorem}\label{SubCPP}
    The process $(Y_t)_{t\geq 0}:=(C_{\Lambda_t})_{t\geq 0}$, defined in (\ref{defsubordinatedprocess}), has a compound Poisson representation with intensity $\psi_{\Lambda}(\lambda)$ for the Laplace exponent $\psi_{\Lambda}$ of $\Lambda$ and jump size distribution
    \begin{equation}\label{Zdistr}
        \frac{\nu_Y(dx)}{\psi_{\Lambda}(\lambda)},
    \end{equation}
    where $\nu_Y$ is the Lévy measure of $Y$, which is given by
    \begin{equation}\label{YLevymeasure}
        \nu_Y(dx)=b_\Lambda\nu_C(dx)+\int_{(0,\infty)}\mu^t(dx)\nu_{\Lambda}(dt).
    \end{equation}
\end{theorem}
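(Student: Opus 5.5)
The plan is to compute the Laplace transform of $Y_t$ by conditioning on the subordinator, read off its Laplace exponent, and recognise it as that of a compound Poisson process. Let $\varphi_X(u):=\mathbb{E}[e^{-uX_1}]$. For the compound Poisson process we have $\mathbb{E}[e^{-uC_s}]=e^{-s\lambda(1-\varphi_X(u))}$ for $s\ge 0$. Since $\Lambda$ is independent of $C$, conditioning on $\Lambda_t$ gives
\begin{equation*}
\mathbb{E}[e^{-uY_t}]=\mathbb{E}\bigl[e^{-\Lambda_t\lambda(1-\varphi_X(u))}\bigr]=e^{-t\psi_\Lambda(\lambda(1-\varphi_X(u)))}
\end{equation*}
by (\ref{laplace}). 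This step is legitimate because $\lambda(1-\varphi_X(u))\ge 0$. This is Bochner's subordination formula in Laplace form. Since $Y$ is a Lévy process (as recalled from \citet{sato1999}) with nonnegative increments, it is determined by the exponent $\Psi_Y(u):=\psi_\Lambda(\lambda(1-\varphi_X(u)))$.

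Next I expand $\Psi_Y$ using (\ref{laplaceexpo}). The drift part is $b_\Lambda\lambda(1-\varphi_X(u))=b_\Lambda\int(1-e^{-ux})\nu_C(dx)$, because $\nu_C=\lambda\,\mathcal{L}(X_1)$. For the jump part, I write $1-e^{-s\lambda(1-\varphi_X(u))}=1-\mathbb{E}[e^{-uC_s}]=\int_{[0,\infty)}(1-e^{-ux})\mu^s(dx)$. The atom of $\mu^s$ at $0$ contributes nothing, so the integral may be taken over $(0,\infty)$. Then I integrate against $\nu_\Lambda(ds)$ and swap the order by Tonelli, which is allowed since the integrand is nonnegative. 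This yields
\begin{equation*}
\Psi_Y(u)=\int_{(0,\infty)}(1-e^{-ux})\,\nu_Y(dx),
\end{equation*}
with $\nu_Y$ as in (\ref{YLevymeasure}). There is no drift term, and by uniqueness of the Lévy–Khintchine representation for subordinators, $\nu_Y$ is the Lévy measure of $Y$.

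It remains to show that $\nu_Y$ is finite with total mass $\psi_\Lambda(\lambda)$. This is the main point, because $\nu_\Lambda$ itself may be infinite, so finiteness is not automatic. I let $u\to\infty$ in the identity above. By monotone convergence the right-hand side tends to $\nu_Y((0,\infty))$. On the left, since $X_i>0$ we have $\varphi_X(u)\to 0$ by dominated convergence. Continuity and monotonicity of $\psi_\Lambda$ then give $\Psi_Y(u)\to\psi_\Lambda(\lambda)<\infty$. Alternatively, one can compute directly: $\mu^s((0,\infty))=1-e^{-\lambda s}$ and $\nu_C((0,\infty))=\lambda$, so
\begin{equation*}
\nu_Y((0,\infty))=b_\Lambda\lambda+\int_{(0,\infty)}(1-e^{-\lambda s})\,\nu_\Lambda(ds)=\psi_\Lambda(\lambda).
\end{equation*}

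Finally, a driftless subordinator with finite Lévy measure of mass $\psi_\Lambda(\lambda)$ is a compound Poisson process with that intensity and jump law $\nu_Y/\psi_\Lambda(\lambda)$. This holds because the Laplace transforms coincide, since $\Psi_Y(u)=\psi_\Lambda(\lambda)\bigl(1-\int e^{-ux}\,\nu_Y(dx)/\psi_\Lambda(\lambda)\bigr)$, and equality of finite-dimensional distributions of Lévy processes follows from equality of their one-dimensional laws. The degenerate case $\psi_\Lambda(\lambda)=0$, where $\Lambda\equiv 0$, is trivial; it is excluded in any case by $\mathbb{E}[\Lambda_1]=1$.
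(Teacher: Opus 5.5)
Your argument is correct, but it obtains the Lévy measure by a different, self-contained route. The paper does not derive $\nu_Y$ at all. It imports the characteristics $b_Y=b_\Lambda b_C=0$ and $\nu_Y(B)=b_\Lambda\nu_C(B)+\int_{(0,\infty)}\mu^t(B)\,\nu_\Lambda(dt)$ from Lemma 2.2 of Selch and Scherer, which is the general subordination theorem for a Lévy base process. Its only real work is the total-mass computation $\nu_Y((0,\infty))=b_\Lambda\lambda+\int(1-e^{-\lambda s})\,\nu_\Lambda(ds)=\psi_\Lambda(\lambda)$, which is exactly your ``alternative'' calculation. It then normalises the (Fourier) characteristic exponent.

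You instead prove the subordination formula in this special case. Conditioning gives $\Psi_Y=\psi_\Lambda(\lambda(1-\varphi_X))$. The identity $1-e^{-s\lambda(1-\varphi_X(u))}=\int_{(0,\infty)}(1-e^{-ux})\,\mu^s(dx)$ together with Tonelli puts $\Psi_Y$ in driftless Lévy--Khintchine form, and uniqueness then identifies $\nu_Y$.

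Your route buys the following:
\begin{itemize}
\item It removes the dependence on the external lemma and keeps everything on the Laplace side, which is natural for subordinators.
\item It makes explicit that the atom of $\mu^s$ at $0$ must be discarded, so the formula for $\nu_Y$ lives on $(0,\infty)$. The paper leaves this tacit.
\item It makes explicit that $\psi_\Lambda(\lambda)>0$, so the normalisation is legitimate. The paper also leaves this tacit.
\item Your $u\to\infty$ argument exhibits the intensity as $-\log\mathbb{P}[Y_1=0]=\psi_\Lambda(\lambda)$, which is a transparent interpretation.
\end{itemize}
The paper's route is shorter and presents the result as a corollary of the general theory.

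One cosmetic point: you appeal to uniqueness of the Lévy--Khintchine representation before checking that $\nu_Y$ is a Lévy measure. This is harmless, because finiteness of $\Psi_Y(1)$ already gives $\int(1-e^{-x})\,\nu_Y(dx)<\infty$, and you establish finiteness of $\nu_Y$ immediately afterwards anyway.
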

\begin{proof}
    By Lemma 2.2 in \citet{selchscherer}, we know that $Y$ is a Lévy subordinator and has the characteristics
    \begin{enumerate}
        \item $b_Y=b_\Lambda b_C$
        \item $\nu_Y(B)=b_\Lambda\nu_C(B)+\int_{(0,\infty)}{\mu^t(B)\nu_{\Lambda}(dt)},\quad \forall B\in\mathcal{B}(\mathbb{R}\setminus\{0\})$. 
    \end{enumerate}
Here $\mu^t:=\mathcal{L}(C_t\vert\mathbb{P})$, for every $t\geq 0$. The quantities $b_{\Lambda}$, $b_Y$ and $b_C$ denote the drift of $\Lambda$, $Y$ and $C$, respectively, while $\nu_{\Lambda}$, $\nu_Y$ and $\nu_C$ denote their Lévy measures.
Since $C$ is a compound Poisson process, we know that $b_C=0$ and therefore $b_Y=0$. Hence, the characteristic exponent of the subordinator $Y$ is given by
\begin{equation*}
\Psi_Y(u)=\int_0^{\infty}\left(e^{iux} -1\right)\nu_Y(dx).
\end{equation*}
To prove that $Y$ is a compound Poisson process, we have to show that the measure $\nu_Y$ is finite. We get
\begin{equation*}
    \int_{(0,\infty)}{\nu_Y(dx)}=\nu_{Y}((0,\infty))={b_\Lambda\nu_{C}((0,\infty))}+
    \int_{(0,\infty)}\mu^t((0,\infty))\nu_\Lambda(dt).
\end{equation*}
The first part of the addition is straightforward, since $X_1$ is positive and $\nu_C$ is the Lévy measure of the compound Poisson process $C$, which is given by $\nu_C(S)=\lambda\mathbb{P}[X_1\in S]$ for every $S\in\mathcal{B}(\mathbb{R}\setminus\{0\})$. Combining this with the second part of the sum, we get
\begin{equation*}
    \int_{(0,\infty)}{\nu_Y(dx)}=b_\Lambda\lambda+\int_{0}^\infty{\mathbb{P}[C_s> 0]}\nu_\Lambda(ds)=b_\Lambda\lambda+\int_{0}^\infty{(1-e^{-\lambda s})\nu_\Lambda(ds)}<\infty.
\end{equation*}
The finiteness follows from the inequality $1-e^{-\lambda s}\leq \min\{\lambda s,1\}$ for all $s\geq 0$, which is integrable with respect to $\nu_\Lambda$, since $\Lambda$ is a Lévy subordinator. Moreover from the calculation we see that
\begin{equation*}
  \int_{(0,\infty)}{\nu_Y(dx)}=\psi_\Lambda(\lambda),  
\end{equation*}
where $\psi_\Lambda$ is the Laplace exponent of $\Lambda$ as defined in (\ref{laplaceexpo}).
\\
Therefore, by normalizing $\nu_Y$, we obtain that $Y$ has characteristic exponent
\begin{equation*}
    \Psi_Y(u)=\psi_\Lambda(\lambda)\int_{(0,\infty)}(e^{iux} -1)\frac{\nu_Y}{\psi_\Lambda(\lambda)}(dx),
\end{equation*}
which proves that $Y$ is a compound Poisson process with the characteristics specified in the theorem.
\end{proof}
\noindent
Theorem \ref{SubCPP} shows that subordination of a compound Poisson process leads again to a compound Poisson process. In addition, the theorem provides the intensity and the severity distribution of the compound Poisson process $Y$, which can be read directly from its characteristic exponent.
\\
For further analysis, we write
\begin{equation}\label{equaldistr}
    Y_t:=\sum_{i=1}^{N_{\Lambda_t}}X_i\overset{\mathcal{L}}{=}\sum_{j=1}^{\tilde{N}_t}Z_j,\quad\forall t>0,
\end{equation}
where $\tilde{N}$ is a standard Poisson process with intensity $\psi_{\Lambda}(\lambda)$, independent of $Z_j$ for every $j\in\mathbb{N}$. The random variables $(Z_j)_{j\in\mathbb{N}}$ are independent and identically distributed, such that the distribution of $Z_1$ is given by \eqref{Zdistr}. Moreover $\psi_{\Lambda}$ is defined as in \eqref{laplaceexpo}. 
For notational simplicity, we define $X\sim X_1$ and write $X$ instead of $X_1$ for a independent copy of $X_1$. Analogously, we define $Z \sim Z_1$ and write $Z$ instead of $Z_1$ for an independent copy $Z$ of $Z_1$.
The distribution of $Z$, as defined in \eqref{Zdistr} and \eqref{YLevymeasure} can be interpreted as a mixture distribution of individual jumps of $X$ and clustered jumps. The first summand in \eqref{YLevymeasure} therefore represents the individual jumps, which follow the distribution of $X$, while the latter part represents the jumps clustered by subordination, since clustering can only occur in the case of a jump in $\Lambda$.
\\
Due to the non-explicit structure of the distribution of $Z$ and the compound Poisson distribution $\mu^t$, there are very few examples in which the distribution of $Z$ can be specified explicitly.
A very easy example is the following. 
\begin{example}
Define
\begin{equation*}
    \Lambda_t:=\frac{1}{2}t+\frac{1}{2}K_t,
\end{equation*}
where $(K_t)_t$ is a standard Poisson process with intensity $1$. It is evident that $\Lambda$ is time-normalized. Denoting the distribution of $Z$ by $P_Z$, it is easy to calculate, that for any $B\in\mathcal{B}(\mathbb{R}\setminus\{0\})$,
\begin{equation}\label{examplestuff}
    \psi_{\Lambda}(\lambda)P_Z(B)=\frac{1}{2}\lambda\mathbb{P}[X\in B]+\mu^{\frac{1}{2}}(B),
\end{equation}
as in this example, $\nu_{\Lambda}(dt)=\mathds{1}_{\frac{1}{2}}(dt)$, which simplifies the integral in \eqref{YLevymeasure} drastically. In the above, $\psi_{\Lambda}(\lambda)=1+\frac{\lambda}{2}-\exp(-\frac{\lambda}{2})$.
\eqref{examplestuff} illustrates the structure of the distribution of $Z$. If the subordinator does not jump, the jump heights are distributed as $X$, and the intensity is given by $\frac{1}{2}\lambda$. If, on the other hand, the subordinator jumps, its jump height in our example is $1/2$. This means that time jumps by a distance of $1/2$, and therefore the clustered jump height has distribution $\mu^{\frac{1}{2}}$, which happens on average once per unit of time. Hence, the distribution of $Z$ can be interpreted as a mixed distribution of clustered and original jump heights.
\end{example}
\subsection{Stochastic Dominance of Subordinated Jump Sizes}
\noindent
Heuristically, one may expect that a claim 
$Z$ is, in a certain sense, larger than a claim
$X$, since
$Z$ may involve the clustering of several smaller losses from the base process. We show that this is indeed the case in the following theorem.
\begin{theorem}
    \begin{equation*}
        X\prec_{st} Z,
    \end{equation*}
where $\prec_{st}$ denotes first-order stochastic dominance.
\end{theorem}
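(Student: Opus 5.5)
We need $\mathbb{P}[Z>x]\ge \mathbb{P}[X>x]$ for every $x\ge 0$. By Theorem \ref{SubCPP},
\begin{equation*}
\psi_\Lambda(\lambda)\,\mathbb{P}[Z>x]=b_\Lambda\lambda\,\mathbb{P}[X>x]+\int_{(0,\infty)}\mathbb{P}[C_t>x]\,\nu_\Lambda(dt),
\end{equation*}
with $\psi_\Lambda(\lambda)=b_\Lambda\lambda+\int_{(0,\infty)}(1-e^{-\lambda t})\nu_\Lambda(dt)$. Writing $\bar F(x):=\mathbb{P}[X>x]$, the claim is therefore equivalent to
\begin{equation*}
\int_{(0,\infty)}\mathbb{P}[C_t>x]\,\nu_\Lambda(dt)\;\ge\;\bar F(x)\int_{(0,\infty)}(1-e^{-\lambda t})\,\nu_\Lambda(dt).
\end{equation*}
The drift terms cancel exactly, so the drift plays no role. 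It therefore suffices to prove the pointwise inequality
\begin{equation*}
\mathbb{P}[C_t>x]\ge (1-e^{-\lambda t})\,\bar F(x)\qquad\text{for all } t>0,\ x\ge 0,
\end{equation*}
and then integrate against $\nu_\Lambda$. Both sides are integrable, since $\mathbb{P}[C_t>x]\le \mathbb{P}[N_t\ge1]=1-e^{-\lambda t}\le\min\{\lambda t,1\}$.

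For the pointwise inequality, condition on $N_t$. On the event $\{N_t=n\}$ with $n\ge1$, all $X_i$ are positive, so $C_t=\sum_{i=1}^n X_i\ge X_1$. Hence $\mathbb{P}[C_t>x\mid N_t=n]\ge \mathbb{P}[X_1>x]=\bar F(x)$. Summing over $n\ge 1$, and using the independence of $N$ and $(X_i)$, gives $\mathbb{P}[C_t>x]\ge \mathbb{P}[N_t\ge1]\,\bar F(x)=(1-e^{-\lambda t})\bar F(x)$.

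Dividing the integrated inequality by $\psi_\Lambda(\lambda)>0$ yields $\mathbb{P}[Z>x]\ge\mathbb{P}[X>x]$ for all $x$, which is $X\prec_{st}Z$. If $\nu_\Lambda\equiv0$, then $\Lambda_t=t$ by the normalization and $Z\overset{\mathcal L}{=}X$, so the statement holds trivially.

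No step is a genuine obstacle. The only points requiring care are the exact cancellation of the drift contributions, which shows the normalization $\mathbb{E}[\Lambda_1]=1$ is not even needed, and the justification for splitting the integrals, which the bound by $\min\{\lambda t,1\}$ provides.
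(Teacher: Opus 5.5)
Your proof is correct and follows the paper's argument essentially verbatim: you bound $\mathbb{P}[C_t>x]$ below by $\mathbb{P}[N_t\ge 1]\,\mathbb{P}[X>x]$ using $C_t\ge X_1$ on $\{N_t\ge 1\}$, integrate against $\nu_\Lambda$, and recognize $b_\Lambda\lambda+\int_{(0,\infty)}(1-e^{-\lambda t})\,\nu_\Lambda(dt)=\psi_\Lambda(\lambda)$. Your extra remarks on integrability via $\min\{\lambda t,1\}$ and on the normalization $\mathbb{E}[\Lambda_1]=1$ being unnecessary are accurate; the paper's proof does not use the normalization either.
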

\begin{proof}
Let $y>0$, then
    \begin{equation*}
        \mathbb{P}[Z>y]\overset{(\ref{Zdistr})}{=}\frac{\nu_Y((y,\infty))}{\psi_{\Lambda}(\lambda)}=\frac{b_\Lambda\nu_C((y,\infty))+\int_{(0,\infty)}\mu^t((y,\infty))\nu_{\Lambda}(dt)}{\psi_{\Lambda}(\lambda)}.
    \end{equation*}
    Since $C$ is a compound Poisson process, with intensity $\lambda$, 
    \begin{equation*}
        \mathbb{P}[Z>y]=\frac{b_\Lambda\lambda\mathbb{P}[X>y]+\int_{(0,\infty)}\mu^t((y,\infty))\nu_{\Lambda}(dt)}{\psi_{\Lambda}(\lambda)}
    \end{equation*}
    \begin{equation*}
    =\frac{b_\Lambda\lambda\mathbb{P}[X>y]+\int_{(0,\infty)}\mathbb{P}[\sum_{j=1}^{N_t}X_j>y]\nu_{\Lambda}(dt)}{\psi_{\Lambda}(\lambda)}.
    \end{equation*}
    Obviously for any $t\geq0$, $\{N_t\neq 0,X_1>y\}\subset\{\sum_{j=1}^{N_t}X_j>y\}$. Hence we obtain
    \begin{equation*}
        \mathbb{P}[Z>y]\geq\frac{b_\Lambda\lambda\mathbb{P}[X>y]+\int_{(0,\infty)}\mathbb{P}[N_t\neq 0]\mathbb{P}[X>y]\nu_{\Lambda}(dt)}{\psi_{\Lambda}(\lambda)}
    \end{equation*}
    \begin{equation*}
        =\frac{b_\Lambda\lambda\mathbb{P}[X>y]+\mathbb{P}[X>y]\int_{(0,\infty)}(1-e^{-\lambda t})\nu_{\Lambda}(dt)}{\psi_{\Lambda}(\lambda)}.
    \end{equation*}
    By the definition of the Laplace exponent, we know
    \begin{equation*}
        =\frac{b_\Lambda\lambda\mathbb{P}[X>y]+\mathbb{P}[X>y](\psi_{\Lambda}(\lambda)-b_{\Lambda}\lambda)}{\psi_{\Lambda}(\lambda)}=\mathbb{P}[X>y],
    \end{equation*}
    which provides the statement $X\prec_{st} Z$. 
\end{proof}
\noindent
The inequality used in the previous proof is very restrictive if the jump size $X$ admits a heavy-tailed distribution, because in that case the value $C_t$ is often triggered by a single large jump.
\\
\begin{figure}[H]
\centering
\includegraphics[width=\linewidth]{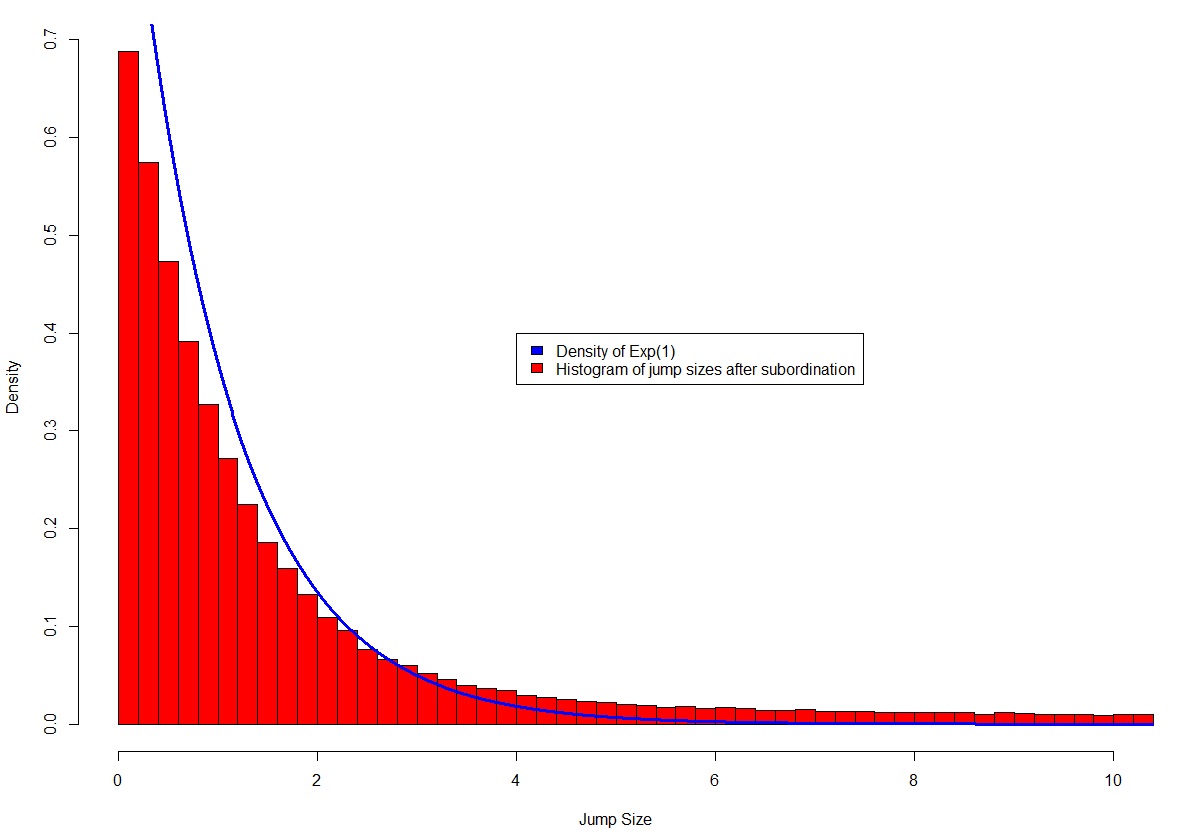}
\caption{Jump size distribution after subordination}
\label{Comparison}
\end{figure}
\noindent
Figure \ref{Comparison} illustrates the jump size distribution of a subordinated compound Poisson process.
The base process has intensity $\lambda=1$ and exponentially distributed jump sizes $X_i\sim Exp(1)$ for all $i\in\mathbb{N}$.
The Lévy subordinator is a compound Poisson process aswell, with drift $b_{\Lambda}=0.2$, intensity $\lambda_{\Lambda}=0.08$ and $Exp(0.1)$ distributed jump sizes.
Since the subordinator admits large jumps, the clustering effect of jumps in the base process is very strong.
As a result, comparing the jump size distribution of the base process and of the subordinated process, we see that the subordination causes a shift of the mass into the tail.
\\
\\
First-order stochastic dominance also allows direct conclusions to be drawn about the behaviour of risk measures. As shown in \citet{bm2006}, first-order stochastic dominance implies that
\begin{equation*}
    \rho(Z)\geq\rho(X)
\end{equation*}
for every monotone and law-invariant risk measure $\rho$.
\\
Accordingly, an increase in risk can be observed under most relevant risk measures. On the other hand, the intensity of the new process is reduced. This interplay will be analysed later.
\subsection{Heavy- and Light-Tailedness of \texorpdfstring{$Z$}{Z}}
As a next step, we want to gain a deeper understanding of the distribution of $Z$ and the effects of subordination to $(C_t)_{t\geq 0}$. Our goal is to understand the behaviour of the tail of $Z$ and to derive conditions under which the jump size $Z$ admits a light-tailed distribution. First of all, we compute the Laplace transform of $Y_t$ for all $t>0$. Since $Y$ is a Lévy process, it suffices to calculate the Laplace transform of $Y_1$, which fully characterizes the Laplace transform of the entire process $Y$.
\begin{lem}\label{lemmalaplace}
    The Laplace transform $\Psi_{Y_1}$ of $Y_1$ is given by
    \begin{equation*}
        \Psi_{Y_1}(u)=\exp(-\psi_\Lambda(\lambda[1-\Psi_X(u)])),\quad u\geq 0
    \end{equation*}
    where $\psi_{\Lambda}$ is defined as in (\ref{laplace}) and $\Psi_X$ is the Laplace transform of jump size $X$.
\end{lem}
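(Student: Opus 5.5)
The plan is to condition on the subordinator and use the independence of $\Lambda$ from $(N,(X_i)_i)$. First I compute the Laplace transform of the base process at a deterministic time. For fixed $s\geq 0$ and $u\geq 0$, conditioning on $N_s$ and using that the $X_i$ are i.i.d.\ and independent of $N$ gives
\begin{equation*}
\mathbb{E}\big[e^{-uC_s}\big]=\sum_{n=0}^\infty e^{-\lambda s}\frac{(\lambda s)^n}{n!}\Psi_X(u)^n=\exp\big(-s\lambda[1-\Psi_X(u)]\big).
\end{equation*}
This is the classical compound Poisson formula.

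Next I substitute $s=\Lambda_1$. Since $\Lambda$ is independent of $C$, the tower property (equivalently, Fubini applied to the product law of $\Lambda_1$ and $C$) gives
\begin{equation*}
\Psi_{Y_1}(u)=\mathbb{E}\big[\mathbb{E}[e^{-uC_{s}}]\big|_{s=\Lambda_1}\big]=\mathbb{E}\big[e^{-\Lambda_1\lambda[1-\Psi_X(u)]}\big]=\mathcal{L}[\Lambda_1]\big(\lambda[1-\Psi_X(u)]\big).
\end{equation*}
Finally I apply \eqref{laplace} with $t=1$, evaluated at the argument $\lambda[1-\Psi_X(u)]$, to obtain $\exp(-\psi_\Lambda(\lambda[1-\Psi_X(u)]))$.

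The only point needing care is that \eqref{laplace} holds for nonnegative arguments. Because $X>0$, we have $0<\Psi_X(u)\leq 1$ for $u\geq 0$, so $\lambda[1-\Psi_X(u)]\in[0,\lambda)$ and the substitution is legitimate. The integrands are also nonnegative and bounded by $1$, so the conditioning and Fubini steps need no further justification.

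As a consistency check, I could also derive the formula from Theorem \ref{SubCPP}. There the Laplace transform is $\exp(-\int(1-e^{-ux})\nu_Y(dx))$. Inserting \eqref{YLevymeasure} and using $\int(1-e^{-ux})\mu^t(dx)=1-e^{-t\lambda[1-\Psi_X(u)]}$ together with $\int(1-e^{-ux})\nu_C(dx)=\lambda[1-\Psi_X(u)]$ reproduces $\psi_\Lambda(\lambda[1-\Psi_X(u)])$ via \eqref{laplaceexpo}. The conditioning route is the primary one; I do not expect any real obstacle beyond the domain remark above.
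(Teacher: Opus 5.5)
Your proposal is correct and follows the paper's proof. Like the paper, you condition on the subordinator (using independence), obtain $\mathbb{E}[e^{-\Lambda_1\lambda[1-\Psi_X(u)]}]$ from the compound Poisson Laplace transform at a fixed time, and then apply \eqref{laplace} with $t=1$. Your check that the argument $\lambda[1-\Psi_X(u)]$ lies in $[0,\lambda)$ makes explicit what the paper only notes in passing, and the cross-check via Theorem~\ref{SubCPP} is correct but optional.
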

\begin{proof}
    Denote the $\sigma$-algebra $\sigma(\Lambda_t,t\geq0)$ by $\sigma(\Lambda)$.
    For all $u\geq 0$
    \begin{equation}\label{LaplaceTrafo}
        \Psi_{Y_1}(u)=\mathbb{E}[\exp(-uC_{\Lambda_1})]=\mathbb{E}[\mathbb{E}[\exp(-u C_{\Lambda_1})|\sigma(\Lambda)]]=\mathbb{E}[\exp(\lambda\Lambda_1[\Psi_{X}(u)-1])],
    \end{equation}
    where $\Psi_X(u):=\mathbb{E}[\exp(-uX)]$. Observe that $\lambda[\Psi_X(u)-1]\leq 0$ for all $u\geq 0$, hence we write (\ref{LaplaceTrafo}) as
    \begin{equation*}
        =\exp(-\psi_{\Lambda}(\lambda[1-\Psi_{X}(u)])).
    \end{equation*}
\end{proof}
\noindent
Since $Y$ is a Lévy process, we can conclude immediately, that for any $t>0$
\begin{equation*}
    \Psi_{Y_t}(u)=\exp(-t\psi_{\Lambda}(\lambda[1-\Psi_{X}(u)])),\quad \forall u>0.
\end{equation*}
As already mentioned in the introduction of this section, we want to analyze the tail behaviour of $Y_1$ and $Z$ respectively.
A very common definition for the presence of heavy tails of a given distribution is the non-existence of the moment-generating function at a point $u>0$. In other words, the presence of heavy tails is defined by the divergence of the Laplace transform at a point $u<0$. This definition of heavy-tailedness is also used in \citet{selchscherer} and defined for example in Definition 2.2 in \citet{fosskorshunov}.
\begin{definition}
    A distribution $F$ on $\mathbb{R}$ is said to be (right-) heavy-tailed if
    \begin{equation*}
        \int_{-\infty}^{\infty}e^{ux}F(dx)=\infty,\quad \forall u>0.
    \end{equation*}
that is, if $F$ fails to possess any positive exponential moment.
\end{definition}
\noindent
Throughout the paper, we extend the Laplace exponent defined in \eqref{laplaceexpo} to negative arguments whenever this is well defined. In this case, evaluating the Laplace exponent at negative values corresponds to considering exponential moments, i.e. the moment-generating function.
We now formulate the criterion for heavy-tailedness of the distribution of $Y_1$.
\begin{lem}\label{lemmaheavytails}
    The distribution of $Y_1$ is heavy-tailed, if and only if the distribution of $\Lambda_1$ or of $X$ is heavy-tailed.
\end{lem}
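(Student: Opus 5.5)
The plan is to work directly with the exponential moments. By the conditioning argument in the proof of Lemma \ref{lemmalaplace}, extended to negative arguments via Tonelli (all integrands are nonnegative), we have for every $u>0$
\begin{equation*}
\mathbb{E}[e^{uY_1}]=\mathbb{E}\bigl[\exp\bigl(\lambda\Lambda_1(\mathbb{E}[e^{uX}]-1)\bigr)\bigr],
\end{equation*}
with the convention that the right-hand side is $+\infty$ if $\mathbb{E}[e^{uX}]=\infty$. This identity holds in $[0,\infty]$ without any finiteness assumption, because conditionally on $\Lambda_1=s$ the variable $C_s$ is compound Poisson, and $\mathbb{E}[e^{uC_s}]=\exp(\lambda s(M_X(u)-1))$ in $[0,\infty]$, where $M_X(u):=\mathbb{E}[e^{uX}]$. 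The statement then reduces to comparing this quantity with the moment generating functions of $X$ and $\Lambda_1$.

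For the direction ``$Y_1$ light-tailed $\Rightarrow$ both $X$ and $\Lambda_1$ light-tailed'', I argue by contraposition in two cases.
\begin{itemize}
\item If $X$ is heavy-tailed, then $M_X(u)=\infty$ for all $u>0$. Since $\mathbb{P}[\Lambda_1>0]>0$ (time normalisation $\mathbb{E}[\Lambda_1]=1$ excludes $\Lambda\equiv0$), the right-hand side is infinite. More elementarily, $e^{uY_1}\ge e^{uX_1}\mathds{1}_{\{N_{\Lambda_1}\ge1\}}$, and independence gives $\mathbb{E}[e^{uY_1}]\ge M_X(u)\,\mathbb{P}[N_{\Lambda_1}\ge 1]=\infty$.
\item If $\Lambda_1$ is heavy-tailed but $X$ is light-tailed, fix $u>0$ with $M_X(u)<\infty$. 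Since $X>0$, we have $a:=\lambda(M_X(u)-1)>0$, and the right-hand side equals $\mathbb{E}[e^{a\Lambda_1}]=\infty$.
\end{itemize}
Hence $\mathbb{E}[e^{uY_1}]=\infty$ for every $u>0$ in both cases.

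For the converse, assume both are light-tailed, say $\mathbb{E}[e^{v\Lambda_1}]<\infty$ for some $v>0$ and $M_X(u_0)<\infty$ for some $u_0>0$. The key point is that $u\mapsto M_X(u)$ is finite and continuous on $[0,u_0]$ with $M_X(0)=1$; continuity follows by dominated convergence, using $e^{u_0X}$ as a dominating function. So I can choose $u\in(0,u_0]$ small enough that $\lambda(M_X(u)-1)\le v$. Then
\begin{equation*}
\mathbb{E}[e^{uY_1}]\le\mathbb{E}[e^{v\Lambda_1}]<\infty,
\end{equation*}
so $Y_1$ is light-tailed.

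The main obstacle is the first step: justifying the conditioning identity when the quantities may be infinite. Lemma \ref{lemmalaplace} only treats $u\ge0$, so I must redo the conditioning with nonnegative integrands, using independence of $\Lambda$ from $(N,X_i)$ and monotone convergence on the series $\sum_n \mathbb{P}[N_s=n]M_X(u)^n$, rather than appealing to an extended Laplace exponent $\psi_\Lambda$ at negative arguments. Once that is in place, the remaining steps are the elementary comparisons above.
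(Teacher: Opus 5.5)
Your proof is correct and follows the paper's route. You use the same conditioning identity $\mathbb{E}[e^{uY_1}]=\mathbb{E}[\exp(\lambda\Lambda_1(M_X(u)-1))]$, which the paper writes as $\exp(-\psi_\Lambda(\lambda[1-M_X(u)]))$ with the extended Laplace exponent, and the same case split on which of $X$, $\Lambda_1$ is heavy-tailed. Your Tonelli justification and your continuity argument for $M_X$ near $0$ make rigorous what the paper only sketches: its ``two ways this could happen'' step, and its unneeded surjectivity claim for $\lambda[1-M_X(u)]$.
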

\begin{proof}
    Let $u>0$. Then, if it exists, the moment-generating function $M_{Y_1}$ of $Y_1$ can be derived from the Laplace transform of $Y_1$ as
    \begin{equation}\label{moment-generating Y_1}
        M_{Y_1}(u)=\exp(-\psi_{\Lambda}(\lambda[1-M_{X}(u)])),
    \end{equation}
    where $M_X$ is the moment generating function of $X$. For $u>0$, $\lambda[1-M_{X}(u)]$ maps surjectively onto $(-\infty,0)$, therefore if the distribution of $\Lambda_1$ is heavy-tailed, i.e. $\psi_{\Lambda}(x)=-\infty$ for every $x<0$, it follows that $M_{Y_1}(u)=\infty$ for any $u>0$, which shows that $Y_1$ is heavy-tailed.
    \\
    On the other hand, let $u>0$, by (\ref{moment-generating Y_1}) it follows immediately that if $X$ admits a heavy-tailed distribution, $M_{X}(u)=\infty$ for any $u>0$ and so $M_{Y_1}(u)=\infty$, which means that $Y_1$ is heavy-tailed.
    \\
    The other direction of the theorem follows by very similar arguments. Suppose that $M_{Y_1}(u)=\infty$ for every $u>0$. This implies that $-\psi_{\Lambda}(\lambda[1-M_X(u)])=\infty$ for every $u>0$. There are two ways this could happen. First, if $\psi_{\Lambda}(u)=-\infty$ for every $u<0$, which implies that $\mathbb{E}[e^{-u\Lambda_1}]=\infty$ for every $u<0$, so $\Lambda_1$ admits a heavy-tailed distribution. Second option is that $\lambda[1-M_X(u)]=-\infty$ for any $u>0$, which immediately implies, that $M_X(u)=\infty$ for every $u>0$ and therefore $X$ admits a heavy-tailed distribution.
    \\
    This completes the proof.
\end{proof}
\newpage
\noindent
The previous proof is similar to the proof of Proposition 2.2 in \citet{selchscherer}.
Since $Y$ is a Lévy process, the statement of the preceding lemma extends to all $t\in\mathbb{R}$.
From the theorem of the previous section, we know that the subordinated process $(Y_t)_t:=(\sum_{k=1}^{N_{\Lambda_t}}X_k)_t$ has a compound Poisson representation
\begin{equation*}
    Y_t\sim \sum_{j=1}^{\tilde{N_t}}Z_j.
\end{equation*}
Hence, the previous lemma provides that $Z$ possesses a heavy-tailed distribution if and only if $X$ or $\Lambda_1$ has a heavy-tailed distribution, which can be seen through standard arguments as in the proof before.
Moreover, from the results of the preceding section we have
\begin{equation}\label{distZ}
    Z\sim \frac{\nu_Y}{\psi_{\Lambda}(\lambda)},
\end{equation}
with $\nu_Y$ as in (\ref{YLevymeasure}).
\section{Adjustment Coefficient in the Subordinated Model: Light-Tailed Jump Sizes}
In risk theory, when the jumps of the compound Poisson process have light tails, the adjustment coefficient becomes a crucial parameter for quantifying the probability of ruin.
We define the subordinated Cramér-Lundberg model via
\begin{equation}\label{CLModel}
    P_t:=u+ct-Y_t,\quad t\geq 0.
\end{equation}
As shown above, $(Y_t)_{t\geq 0}$ is again a compound Poisson process, such that the structure in the subordinated Cramér-Lundberg model and the structure in the standard Cramér-Lundberg model is kept the same. In (\ref{CLModel}), $u$ denotes the initial capital, $c$ is the premium rate and $Y_t$ models the claims side of the insurer.
\\
Throughout the whole paper, we assume that the premium rate $c$ fulfills the net profit condition, i.e.\\ $c>\mathbb{E}[Y_1]=\psi_{\Lambda}(\lambda)\mathbb{E}[Z]=\lambda\mathbb{E}[X_1]$ (see, e.g., \citet{schmidli2017}). Note that the net profit condition depends only on the expectation of $Y_t$. Consequently, it remains invariant under subordination with respect to a time-normalized subordinator. This means that the insurer's fair premium remains unaffected by the subordination.
\\
Our main interest in this article lies in the quantification of the ruin probability of the surplus process $(P_t)_{t\geq 0}$. Formally, this means quantifying the exit probability
\begin{equation*}
\Psi_P(u):=\mathbb{P}\left[\inf_{t>0}P_t<0|P_0=u\right].
\end{equation*}
\\
For the main result of this section, we require the following lemma, which shows that the Laplace exponent of a time-normalized Lévy subordinator always lies below the bisector.
\begin{lem}\label{orderinglemma}
    Let $(\Lambda_t)_t$ be a time-normalized Lévy subordinator (i.e., $\mathbb{E}[\Lambda_1]=1)$, then $\psi_{\Lambda}(u)\leq u$ for any $u\in\mathbb{R}_{+}$. Moreover, if $\psi_{\Lambda}(-s)$ exists for $s>0$, then also $\psi_{\Lambda}(-s)\leq-s$.
\end{lem}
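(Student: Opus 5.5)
The plan is to use concavity of $\psi_\Lambda$, or equivalently Jensen's inequality applied to the Laplace transform \eqref{laplace}. For $u\in\mathbb{R}$ with $\mathbb{E}[e^{-u\Lambda_1}]<\infty$ we have $\psi_\Lambda(u)=-\log\mathbb{E}[e^{-u\Lambda_1}]$, using \eqref{laplace} at $t=1$ and the convention that the Laplace exponent is extended to negative arguments whenever it is well defined. Since $x\mapsto e^{-ux}$ is convex for every real $u$, Jensen's inequality together with the time normalization $\mathbb{E}[\Lambda_1]=1$ gives
\begin{equation*}
\mathbb{E}[e^{-u\Lambda_1}]\geq e^{-u\mathbb{E}[\Lambda_1]}=e^{-u}.
\end{equation*}
Taking $-\log$, which is decreasing, yields $\psi_\Lambda(u)\leq u$. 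This argument covers both claims at once: $u\geq 0$, and $u=-s<0$ whenever the exponential moment $\mathbb{E}[e^{s\Lambda_1}]$ is finite, so that $\psi_\Lambda(-s)\leq -s$.

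As a cross-check, and as an alternative route that stays with the representation \eqref{laplaceexpo}, I would also note the direct argument. Differentiating under the integral shows that $\mathbb{E}[\Lambda_1]=\psi_\Lambda'(0)=b_\Lambda+\int_{(0,\infty)}x\,\nu_\Lambda(dx)=1$. Using $1-e^{-ux}\leq ux$ for all real $u$ and $x\geq 0$, this gives
\begin{equation*}
\psi_\Lambda(u)\leq b_\Lambda u+u\int_{(0,\infty)} x\,\nu_\Lambda(dx)=u.
\end{equation*}
The inequality $1-e^{-y}\leq y$ holds for all real $y$, so negative $u$ is covered as long as the integral defining $\psi_\Lambda(-s)$ converges. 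In that case $\int_{(0,\infty)}(e^{sx}-1)\,\nu_\Lambda(dx)<\infty$, and the integrals may be split.

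The only step needing care is the identification $\mathbb{E}[\Lambda_1]=b_\Lambda+\int x\,\nu_\Lambda(dx)$, which amounts to justifying the interchange of differentiation and integration at $0$. Monotone convergence handles it, because $(1-e^{-ux})/u\uparrow x$ as $u\downarrow 0$. This is also where finiteness of the mean enters. The Jensen route avoids this step entirely, so I would present that one and mention the Lévy–Khintchine version only as a remark.
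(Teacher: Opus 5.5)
Your proof is correct, and your primary route differs from the paper's. The paper works inside the L\'evy--Khintchine form \eqref{laplaceexpo}. It rewrites the normalization as $1=b_\Lambda+\int_0^\infty x\,\nu_\Lambda(dx)$, integrates the pointwise bound $1-e^{-ux}\le ux$ against $\nu_\Lambda$, and treats $-s$ by ``the same steps''. That is exactly your secondary route, except that the paper uses the mean formula without comment, whereas you justify it by monotone convergence. Your Jensen argument applies the same tangent-line inequality to the law of $\Lambda_1$ rather than to the L\'evy measure. The inequality step therefore uses nothing about $\Lambda$ beyond $\mathbb{E}[\Lambda_1]=1$: it shows $-\log\mathbb{E}[e^{-uW}]\le u\,\mathbb{E}[W]$ for any nonnegative $W$ with finite mean, and it never needs the triplet formula for the mean. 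The cost appears at negative arguments. There you need that the integral defining $\psi_\Lambda(-s)$ converges exactly when $\mathbb{E}[e^{s\Lambda_1}]<\infty$, and that it then equals $-\log\mathbb{E}[e^{s\Lambda_1}]$ (the exponential-moment theorem for L\'evy processes, e.g.\ \citet{sato1999}, Theorem 25.17). The paper builds this correspondence into its convention for negative arguments, so you are entitled to use it. The paper's route, by contrast, needs only convergence of $\int_0^\infty(e^{sx}-1)\,\nu_\Lambda(dx)$ and no probabilistic identification.
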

\begin{proof}
    Let $(\Lambda_t)_t$ be a time-normalized Lévy subordinator, the Laplace exponent of $\Lambda$ is given by
    \begin{equation*}
        \psi_{\Lambda}(u)=b_{\Lambda} u+\int_0^{\infty}(1-e^{-ux})\nu_{\Lambda}(dx).
    \end{equation*}
    By the time normalization condition $\mathbb{E}[\Lambda_1]=1$, it holds that
    \begin{equation*}
        1=b_{\Lambda}+\int_{0}^{\infty}x\nu_{\Lambda}(dx).
    \end{equation*}
Therefore we get for $u\geq0$
\begin{equation*}
\psi_{\Lambda}(u)=b_{\Lambda}u+\int_{0}^{\infty}(1-e^{-ux})\nu_{\Lambda}(dx)\leq b_{\Lambda} u+\int_{0}^{\infty}ux\nu(dx)=u\left (b_{\Lambda}+\int_{0}^{\infty}x\nu_{\Lambda}(dx)\right)= u.
\end{equation*}
Suppose that $\psi_{\Lambda}(-s)$ exists for $s>0$, then we can perform the same steps, to show that $\psi_{\Lambda}(-s)\leq -s$, which proves the statement.
\end{proof}
\noindent
The previous Lemma confirms what one would expect. As we know, the subordinated process $Y$ has a compound Poisson representation with intensity $\psi_{\Lambda}(\lambda)$, where $\lambda$ is the intensity of the base process $C$. The lemma therefore implies that the intensity of $Y$ is lower, than the intensity of $C$, i.e. $\mathbb{E}[\tilde{N}_1]\leq\mathbb{E}[N_1]$, which is a consequence of the clustering in the subordinated process. Moreover the lemma will be useful for the upcoming theorem.
\newpage
\noindent
We consider the Cramér-Lundberg process $(P_t)_{t\geq0}$ as described in (\ref{CLModel}) and compare $P$ to the initial risk model
\begin{equation*}
    S_t:=u+ct-\sum_{k=1}^{N_t}X_k=u+ct-C_t.
\end{equation*}
Here, the time normalization of $\Lambda$ plays a crucial role when comparing $P$ and $S$, since the expected total claim amount remains invariant under time-normalized subordination.
\\
In the case of light-tailed jump sizes, classical Cramér-Lundberg theory applies (cf. \citet{schmidli2017}, Sections 5.5 and 5.6). We now assume that the moment-generating function of $X$ and $\Lambda_1$ exists on a sufficiently large domain. According to Lemma \ref{lemmaheavytails}, standard Cramér-Lundberg theory can therefore be applied.
Let $M_X$ and $M_Z$ be the moment-generating functions of $X$ and $Z$, respectively. We write $R$ for the non-trivial solution of
\begin{equation}\label{AdjustmentfunctionC}
    \Theta(r)=\lambda\left(M_X(r)-1\right)-cr=0
\end{equation}
and $R_\Lambda$ for the non-trivial solution of 
\begin{equation}\label{Thetalambda}
    \Theta_\Lambda(r)=\psi_{\Lambda}(\lambda)(M_Z(r)-1)-cr=0.
\end{equation}
$\Theta$ and $\Theta_{\Lambda}$ are the so-called adjustment functions of the respective Cramér-Lundberg process.
The trivial solutions are obviously given for $r=0$.
Assuming that the adjustment coefficients $R$ and $R_\Lambda$ exist and $M_{X}'(R),M_{Z}'(R_{\Lambda})<\infty$, by Theorem 5.5 in \citet{schmidli2017}, we know that the ruin probability $\Psi_S$ and $\Psi_P$ of $S$ and $P$ have the asymptotics
\begin{equation}\label{ruins}
    \Psi_S(u)\sim \frac{c-\lambda\mathbb{E}[X]}{\lambda M_X'(R)-c}e^{-Ru};\quad\Psi_{P}(u)\sim\frac{c-\lambda\mathbb{E}[X]}{\psi_{\Lambda}(\lambda)M_Z'(R_\Lambda)-c}e^{-R_\Lambda u}, \quad u\to\infty,
\end{equation}
where $u$ is the initial capital in the portfolio.
\noindent
We formulate the following.
\begin{lem}
    Suppose that the moment-generating function of $X$ exists on $[0,x)$. Moreover the moment-generating function of $\Lambda_1$ exists on $[0,l)$, then the moment-generating function of $Z$ exists on $\left[0,x\wedge(\lambda[M_X(x)-1]\wedge l)\right)$ and is given by
    \begin{equation*}
        M_Z(r)=1-\frac{\psi_{\Lambda}\left(\lambda[1-M_X(r)]\right)}{\psi_{\Lambda}(\lambda)}.
    \end{equation*}
\end{lem}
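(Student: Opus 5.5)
Two facts already established in the paper drive the argument. First, $Y_1$ is compound Poisson with intensity $\psi_\Lambda(\lambda)$ and jump law $Z$ (Theorem \ref{SubCPP}), so whenever $M_Z(r)$ is finite we have $M_{Y_1}(r)=\exp\bigl(\psi_\Lambda(\lambda)(M_Z(r)-1)\bigr)$. Second, Lemma \ref{lemmalaplace}, continued to negative Laplace arguments, gives $M_{Y_1}(r)=\exp\bigl(-\psi_\Lambda(\lambda[1-M_X(r)])\bigr)$. Equating the two exponents and solving for $M_Z$ yields the claimed formula. The real content is therefore the existence domain, and the identification is best done directly on the Lévy measure rather than through $M_{Y_1}$.

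\textbf{Formula via the Lévy measure.} By \eqref{distZ} and \eqref{YLevymeasure}, for $r\ge0$,
\begin{equation*}
\psi_\Lambda(\lambda)\bigl(M_Z(r)-1\bigr)=\int_{(0,\infty)}(e^{ry}-1)\,\nu_Y(dy)
= b_\Lambda\lambda\bigl(M_X(r)-1\bigr)+\int_{(0,\infty)}\bigl(\mathbb{E}[e^{rC_t}]-1\bigr)\nu_\Lambda(dt).
\end{equation*}
Here I use that $\nu_Y$ has total mass $\psi_\Lambda(\lambda)$, and exchange integrals by Tonelli, which is legitimate because all integrands are nonnegative.

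Since $\mathbb{E}[e^{rC_t}]=e^{t\lambda(M_X(r)-1)}$, set $s:=\lambda(M_X(r)-1)\ge0$. The right-hand side then equals
\begin{equation*}
b_\Lambda s+\int_{(0,\infty)}(e^{st}-1)\,\nu_\Lambda(dt)=-\psi_\Lambda(-s),
\end{equation*}
where $\psi_\Lambda$ is extended to negative arguments via \eqref{laplaceexpo}. Dividing by $\psi_\Lambda(\lambda)$ gives
\begin{equation*}
M_Z(r)=1-\frac{\psi_\Lambda\bigl(\lambda[1-M_X(r)]\bigr)}{\psi_\Lambda(\lambda)}.
\end{equation*}
This identity holds in $[0,\infty]$ for every $r\ge0$, so finiteness of $M_Z(r)$ is exactly equivalent to the two conditions
\begin{equation*}
M_X(r)<\infty \quad\text{and}\quad \int_{(1,\infty)}e^{st}\,\nu_\Lambda(dt)<\infty .
\end{equation*}

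\textbf{Existence domain.} I will use the standard fact (Sato, Thm.\ 25.3) that $\mathbb{E}[e^{s\Lambda_1}]<\infty$ if and only if $\int_{(1,\infty)}e^{st}\nu_\Lambda(dt)<\infty$. This converts the second condition into $s<l$, i.e.\ $\lambda(M_X(r)-1)<l$.

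For $r<x$, $M_X(r)$ is finite and increasing in $r$. So the admissible set of $r$ is $\{r<x:\lambda(M_X(r)-1)<l\}$, an interval starting at $0$, and the work is to show it contains the interval in the statement.

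The step I expect to be the main obstacle is matching this with the stated bound $x\wedge(\lambda[M_X(x)-1]\wedge l)$. That expression mixes an $r$-threshold with an $s$-threshold, so I would read it as the condition $r<x$ together with $\lambda(M_X(r)-1)<l$, the latter obtained by monotonicity of $M_X$. If a literal reading is intended, I would argue via the elementary bound $r\le M_X(r)-1$ when $\mathbb{E}X\ge1$, or more carefully bound $\lambda(M_X(r)-1)$ on the stated interval.

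I would present the result as finiteness on the (possibly larger) set $\{r<x:\lambda(M_X(r)-1)<l\}$ and note that it contains the stated interval, checking the inclusion by monotonicity and convexity of $M_X$.
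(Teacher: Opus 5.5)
Your derivation of the formula is correct, and it takes a different route from the paper. The paper computes $M_{Y_1}$ twice, once by conditioning on $\Lambda$ (Lemma~\ref{lemmalaplace}) and once from the compound Poisson representation of Theorem~\ref{SubCPP}, and then equates the exponents. You instead integrate $e^{ry}-1$ directly against the explicit measure $\nu_Y$ using Tonelli. Your computation is an identity in $[0,\infty]$, so existence and formula come out together: $M_Z(r)<\infty$ exactly when $M_X(r)<\infty$ and $\int_{(1,\infty)}e^{st}\nu_\Lambda(dt)<\infty$, with $s=\lambda(M_X(r)-1)$. The paper's route reads the domain directly off $M_{\Lambda_1}$, but it needs the continuation $\mathbb{E}[e^{s\Lambda_1}]=e^{-\psi_\Lambda(-s)}$ to reach the closed form. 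Yours gets the closed form with $\psi_\Lambda(-s)$ defined by the integral, and pays with Sato's Theorem 25.3 to turn the condition into one on $\Lambda_1$.

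The step that fails is your final reconciliation. The set $\{r<x:\ \lambda(M_X(r)-1)<l\}$ does not contain $[0,x\wedge\lambda(M_X(x)-1)\wedge l)$ in general, and no monotonicity or convexity argument can produce that inclusion. Your bound $M_X(r)-1\ge r\,\mathbb{E}X$ points the wrong way: you would need an upper bound on $\lambda(M_X(r)-1)$. Moreover, when $M_X(x)=\infty$ the printed interval $[0,x\wedge l)$ does not depend on $\lambda$ at all, whereas the true constraint becomes arbitrarily restrictive as $\lambda$ grows.

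Here is a concrete counterexample. Take $\lambda=1$ and $X\sim\mathrm{Exp}(1)$, so $x=1$ and $M_X(x)=\infty$. Let $\Lambda_t=\tfrac12 t+\sum_{i=1}^{K_t}J_i$, with $K$ a Poisson process of rate $\tfrac12$ and $J_i\sim\mathrm{Exp}(1)$. Then $\mathbb{E}[\Lambda_1]=1$, and $\mathbb{E}[e^{s\Lambda_1}]=\exp\bigl(\tfrac s2+\tfrac{s}{2(1-s)}\bigr)$ for $s<1$ and $=\infty$ for $s\ge1$. So $l=1$ and the stated interval is $[0,1)$. But $s=r/(1-r)\ge1$ for $r\ge\tfrac12$, hence $M_Z(r)=\infty$ on $[\tfrac12,1)$.

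The interval in the statement is therefore wrong as printed. The paper's proof only asserts that the restriction ``follows by'' the formula, and what the formula actually gives is your set. Instead of trying to recover the printed interval, state the lemma on $[0,x\wedge r^\ast)$ with $r^\ast:=\sup\{r<x:\ \lambda(M_X(r)-1)<l\}$. Your argument proves that version completely.
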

\begin{proof}
    The existence follows immediately from Lemma \ref{lemmaheavytails}.
    From Lemma \ref{lemmalaplace} we derive, that the moment-generating function of $Y_1$ is given by
    \begin{equation*}
        M_{Y_1}(r)=\exp(-\psi_{\Lambda}\left(\lambda[1-M_{X}(r)]\right)).
    \end{equation*}
    On the other hand, by the compound Poisson representation of $Y$, we know that
    \begin{equation*}
        M_{Y_1}(r)=\exp\left(\psi_{\Lambda}(\lambda)[M_{Z}(r)-1] \right).
    \end{equation*}
If the moment-generating function exists in some neighbourhood of $0$, it uniquely determines the distribution. Therefore we can combine the two equations and conclude that
\begin{equation}\label{MZ formel}
    M_Z(r)=1-\frac{\psi_{\Lambda}\left(\lambda[1-M_X(r)]\right)}{\psi_{\Lambda}(\lambda)}.
\end{equation}
The restriction to the interval $\left[0,x\wedge(\lambda[M_X(x)-1]\wedge l)\right)$ follows by \eqref{MZ formel}.
\end{proof}
\noindent
\\
Now we are able to prove the following statement for the light-tailed case. From the following result we can derive the asymptotic behaviour of the ruin probability explicitly.
\begin{theorem}
    Suppose that the adjustment coefficient $R$ of $(S_t)_{t\geq0}$ exists. Let $\Lambda$ be a time-normalized Lévy subordinator, s.t. the adjustment coefficient $R_\Lambda$ for $(P_t)_{t\geq0}$ exists. Then
    \begin{equation*}
        R_\Lambda\leq R.
    \end{equation*}
\end{theorem}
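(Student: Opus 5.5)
We compare the two adjustment functions $\Theta$ and $\Theta_\Lambda$ from \eqref{AdjustmentfunctionC} and \eqref{Thetalambda} directly. The plan is to show $\Theta_\Lambda(r)\geq\Theta(r)$ wherever both are finite. Convexity then places the positive root of $\Theta_\Lambda$ to the left of that of $\Theta$.

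The first step is to rewrite $\Theta_\Lambda$ in terms of $X$ and $\Lambda$ only. We insert the formula for $M_Z$ from the preceding lemma (equation \eqref{MZ formel}). This gives
\begin{equation*}
\Theta_\Lambda(r)=\psi_\Lambda(\lambda)\bigl(M_Z(r)-1\bigr)-cr=-\psi_\Lambda\bigl(\lambda[1-M_X(r)]\bigr)-cr .
\end{equation*}
For $r>0$ with $M_X(r)<\infty$ we have $M_X(r)\geq 1$. Hence the argument $-s:=\lambda[1-M_X(r)]$ is nonpositive. Wherever $\psi_\Lambda(-s)$ is finite, Lemma \ref{orderinglemma} yields $\psi_\Lambda(-s)\leq -s$. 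Therefore
\begin{equation*}
\Theta_\Lambda(r)\geq \lambda\bigl(M_X(r)-1\bigr)-cr=\Theta(r).
\end{equation*}
If instead $\psi_\Lambda(-s)=-\infty$, then $\Theta_\Lambda(r)=+\infty$ and the inequality holds trivially.

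Second, we locate the roots using the standard shape of adjustment functions. Both functions are convex on their domains and vanish at $0$. By the net profit condition, both have the same negative derivative $\lambda\mathbb{E}[X]-c=\psi_\Lambda(\lambda)\mathbb{E}[Z]-c$ at $0$. So each function is negative on $(0,\text{root})$ and positive beyond its nontrivial root. Evaluating at $r=R$ gives $\Theta_\Lambda(R)\geq\Theta(R)=0$. Suppose, for contradiction, that $R_\Lambda>R$. Then convexity of $\Theta_\Lambda$ together with $\Theta_\Lambda(0)=\Theta_\Lambda(R_\Lambda)=0$ forces $\Theta_\Lambda(R)<0$, which is a contradiction. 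Hence $R_\Lambda\leq R$.

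The main obstacle is the domain bookkeeping. The moment generating function of $Z$ may blow up before $R$, namely when $\psi_\Lambda$ fails to exist at $\lambda[1-M_X(R)]$. Since $R_\Lambda$ exists by assumption, this case actually helps the argument, and I would handle it as follows. If $M_Z(R)=\infty$, then $R$ lies outside the domain of $M_Z$. Because $R_\Lambda$ is a point where $M_Z$ is finite and the domain of an mgf is an interval containing $0$, we get $R_\Lambda<R$ directly. Otherwise the comparison above applies at $r=R$. A further point is to justify $\Theta_\Lambda<0$ strictly on $(0,R_\Lambda)$. This follows from strict convexity, which holds because $Z$ is nondegenerate, together with $\Theta_\Lambda'(0)<0$.
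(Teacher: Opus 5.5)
Your proposal is correct and is essentially the paper's proof. Like the paper, you rewrite $\Theta_\Lambda(r)=-\psi_\Lambda(\lambda[1-M_X(r)])-cr$ via the formula for $M_Z$, apply Lemma \ref{orderinglemma} at the nonpositive argument to get $\Theta_\Lambda\geq\Theta$, and conclude by convexity and the common zero at $r=0$; your extra bookkeeping (the case $M_Z(R)=\infty$, and strict negativity of $\Theta_\Lambda$ on $(0,R_\Lambda)$, where convexity plus $\Theta_\Lambda'(0)=\lambda\mathbb{E}[X]-c<0$ already suffices without strict convexity) spells out what the paper leaves implicit.
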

\begin{proof}
    By (\ref{AdjustmentfunctionC}) the adjustment-equation of $R$ is given by
    \begin{equation*}
        \Theta(r)=\lambda(M_X(r)-1)-cr=0.
    \end{equation*}
    By the previous Lemma and \eqref{Thetalambda}, the adjustment equation associated to $P$ is given by
    \begin{equation*}
        \Theta_\Lambda(r)=-\psi_{\Lambda}(\lambda[1-M_{X}(r)])-cr=0.
    \end{equation*}
    By Lemma \ref{orderinglemma} it follows, that
    \begin{equation*}
       \Theta_\Lambda(r)=-\psi_\Lambda(\lambda[1-M_{X}(r)])-cr\geq -\lambda(1-M_X(r))-cr=\lambda(M_X(r)-1)-cr=\Theta(r).
    \end{equation*}
It holds, that $\Theta(0)=\Theta_{\Lambda}(0)=0$. Moreover by the existence of the adjustment coefficient, $\Theta(R)=\Theta_{\Lambda}(R_{\Lambda})=0$.
By convexity of $\Theta$ and $\Theta_\Lambda$ (see for example \citet{schmidli2017}) it follows, that $R_\Lambda\leq R$ with strict inequality unless $\Lambda$ is deterministic.
\end{proof}
\noindent
The theorem shows, that in the subordinated model, the ruin probability $\Psi_P(u)$ as a function of the initial capital always decays slower than the ruin probability $\Psi_S(u)$ in the initial model. The time normalization here is crucial for ensuring comparability of these two models, as it preserves the expected value of both processes.
\begin{figure}[H]
    \centering
    \includegraphics[width=0.7\linewidth]{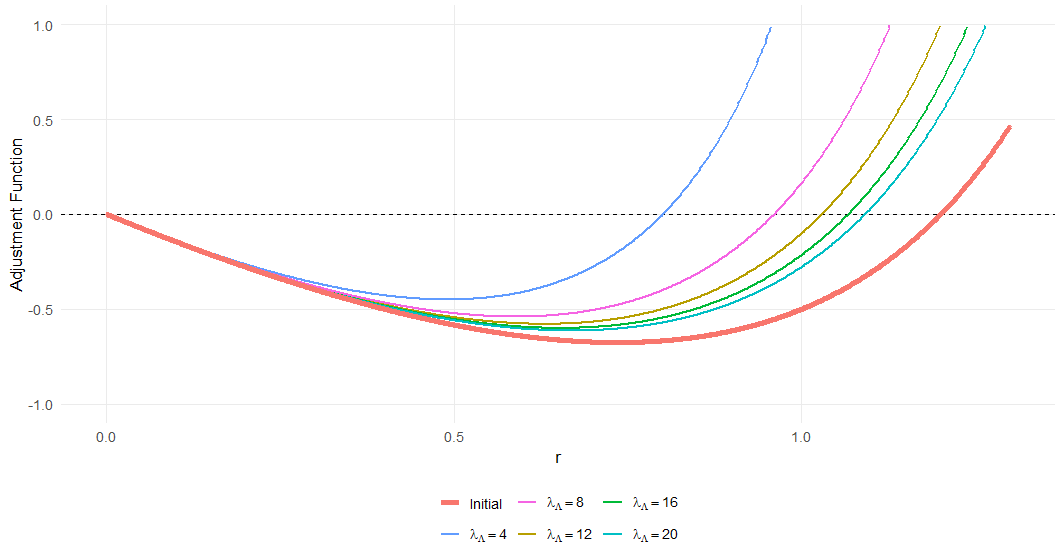}
    \caption{The adjustment function for different subordinators}
    \label{adjustmentfunction}
\end{figure}
\noindent
Figure \ref{adjustmentfunction} shows the adjustment functions for various time-normalized subordinators. As the base compound Poisson process, we have chosen the intensity $\lambda=2$ and exponentially distributed jump heights $X_i\sim Exp(2)$. The premium rate is set to $c=2.5$, which clearly satisfies the net profit condition. All considered subordinators are compound Poisson processes with different intensities $\lambda_{\Lambda}$. For simplicity, the jump sizes in the subordinator are also chosen to be exponentially distributed. Due to time normalization, the parameter of the exponentially distributed jump heights in the subordinator are therefore $\lambda_{\Lambda}$ as well. As shown in the previous theorem, the non-trivial zero points of the adjustment functions after subordination lie to the left of those of the base process. This implies that the decay of the ruin probability in the subordinated model is significantly slower than the decay of the ruin probability in the initial model (cf. (\ref{ruins})).
\newpage
\section{Asymptotic Ruin Probabilities: Heavy-Tailed Jump Sizes}
The following section is devoted to the case in which the subordinated process has a heavy-tailed jump size distribution. As established in Theorem \ref{lemmaheavytails}, this case occurs if we start with heavy-tailed jump size distributions or if the subordinator has a heavy-tailed infinitely divisible distribution. In this chapter, we focus mainly on the class of subexponential distributions. Subexponentiality offers the advantage that in certain situations the asymptotics of the ruin probability can still be quantified well. Particular attention is given to the class of distributions with regularly varying tails, for which Karamata's Theorem yields explicit asymptotics of the ruin probability. Classical Cramér-Lundberg theory is no longer applicable in this context, since the jump heights lack exponential moments and, consequently, the moment-generating function is not defined.
\\
\\
We begin by recalling some basic definitions used throughout this section.
For any cumulative distribution function $G$ with $G(0)=0$, we denote by $\overline{G}$ its right tail and for $n\in\mathbb{N}$, we write $G^{*n}$ for the $n$-fold convolution of $G$.
\subsection{Preliminary Definitions}
\begin{definition}
    A distribution $F$ with support $(0,\infty)$ is called subexponential, if for all $n\geq2$,
\begin{equation*}
    \lim_{x\to\infty}\frac{\overline{F^{*n}}(x)}{\overline{F}(x)}=n.
\end{equation*}
We denote the class of subexponential distributions by $\mathcal{S}$.
\end{definition}
\noindent
Another very important definition for this chapter is the following.
\begin{definition}
    \begin{enumerate}
  \item A positive, Lebesgue measurable function \(L\) on \((0,\infty)\) is slowly varying (at \(\infty\)) if
  \[
    \lim_{x\to\infty} \frac{L(t\,x)}{L(x)} \;=\; 1,
    \quad t>0.
  \]
  \\
  \item A positive, Lebesgue measurable function \(h\) on \((0,\infty)\) is regularly varying (at \(\infty\)) of index \(\alpha\in\mathbb{R}\) if
  \[
    \lim_{x\to\infty} \frac{h(t\,x)}{h(x)} \;=\; t^{\alpha},
    \quad t>0.
  \]
\end{enumerate}
\end{definition}
\noindent
Clearly, the case of a slowly varying function corresponds to a regularly varying function with the index $\alpha=0$.  
In the following, the cumulative distribution function of $\Lambda_1$ is denoted by $F_{\Lambda_1}$ and analogously the CDF of $X$ is denoted by $F_{X}$, where $X$ and $\Lambda_1$ are defined as before. We also write $\overline{F}_{\Lambda_1}$ and $\overline{F}_{X}$ for the corresponding right tails, i.e.
\begin{equation*}
    \overline{F}_{\Lambda_1}(x):=1-F_{\Lambda_1}(x),\quad \overline{F}_{X}(x):=1-F_{X}(x).
\end{equation*}
Our interest lies in the class of distribution functions whose tail $\overline{F_{\cdot}}$ is a regularly varying function with some index $-\alpha<0$. As shown in \citet{ekm2013}, distributions with this property belong to the class $\mathcal{S}$.
\noindent
In the following, we write $f(x)\sim g(x)$ for functions $f$ and $g$, if
\begin{equation*}
    \lim_{x\to\infty} \frac{f(x)}{g(x)}=1.
\end{equation*}
\subsection{Subexponential Initial Jump Size}
First, we consider the case where the subordinator has a light-tailed distribution and $X$ is subexponential.
\begin{theorem}
    Suppose that $\Lambda_1$ is time-normalized and light-tailed. Moreover $\mathcal{L}(X|\mathbb{P})\in\mathcal{S}$. Then $Z\in\mathcal{S}$ and \begin{equation*}
        \mathbb{P}[Z>x]\sim\frac{\lambda}{\psi_{\Lambda}(\lambda)}\mathbb{P}[X>x].
    \end{equation*}
\end{theorem}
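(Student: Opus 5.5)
We work directly with the representation \eqref{Zdistr}--\eqref{YLevymeasure}. For $x>0$,
\begin{equation*}
\psi_{\Lambda}(\lambda)\,\mathbb{P}[Z>x]=b_\Lambda\lambda\,\mathbb{P}[X>x]+\int_{(0,\infty)}\mathbb{P}[C_t>x]\,\nu_\Lambda(dt).
\end{equation*}
Dividing by $\mathbb{P}[X>x]$, it suffices to show that
\begin{equation*}
\int_{(0,\infty)}\frac{\mathbb{P}[C_t>x]}{\mathbb{P}[X>x]}\,\nu_\Lambda(dt)\longrightarrow\int_{(0,\infty)}\lambda t\,\nu_\Lambda(dt).
\end{equation*}
Time normalization gives $b_\Lambda+\int t\,\nu_\Lambda(dt)=1$, so the total limit is $\lambda$ and $\mathbb{P}[Z>x]\sim \frac{\lambda}{\psi_\Lambda(\lambda)}\mathbb{P}[X>x]$.

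The pointwise limit comes from the standard subexponential result for random sums (e.g.\ \citet{ekm2013}). For each fixed $t$, since $N_t$ is Poisson and therefore has exponential moments,
\begin{equation*}
\mathbb{P}[C_t>x]\sim\mathbb{E}[N_t]\,\mathbb{P}[X>x]=\lambda t\,\mathbb{P}[X>x].
\end{equation*}

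The main obstacle is exchanging limit and integral, since $\nu_\Lambda$ may be infinite near $0$ and has unbounded support. We use dominated convergence with Kesten's bound: for every $\varepsilon>0$ there is $K_\varepsilon$ with
\begin{equation*}
\overline{F_X^{*n}}(x)\le K_\varepsilon(1+\varepsilon)^n\,\overline{F}_X(x)\quad\text{for all } n,x.
\end{equation*}
This yields
\begin{equation*}
\frac{\mathbb{P}[C_t>x]}{\mathbb{P}[X>x]}\le K_\varepsilon\sum_{n\ge1}e^{-\lambda t}\frac{(\lambda t)^n}{n!}(1+\varepsilon)^n=K_\varepsilon\bigl(e^{\lambda\varepsilon t+\lambda t}-e^{\lambda t}\bigr)e^{-\lambda t}=K_\varepsilon\bigl(e^{\lambda\varepsilon t}-1\bigr).
\end{equation*}
The summation starts at $n=1$, which is important near $0$. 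The dominating function $K_\varepsilon(e^{\lambda\varepsilon t}-1)$ is of order $t$ near $0$, which is $\nu_\Lambda$-integrable because $\int t\,\nu_\Lambda(dt)<\infty$. For large $t$ it is integrable because $\Lambda_1$ is light-tailed: $\psi_\Lambda(-s)$ is finite for some $s>0$, so $\int_{(1,\infty)}e^{st}\nu_\Lambda(dt)<\infty$, and we choose $\varepsilon$ with $\lambda\varepsilon<s$. Dominated convergence then gives the limit.

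It remains to show $Z\in\mathcal{S}$. Since $\mathbb{P}[Z>x]$ is tail-equivalent to a constant multiple of $\overline{F}_X$ with $F_X\in\mathcal{S}$, closure of $\mathcal{S}$ under tail equivalence (see \citet{ekm2013} or \citet{fosskorshunov}) gives $\mathcal{L}(Z)\in\mathcal{S}$. Both $X$ and $Z$ have support in $(0,\infty)$, so this closure result applies directly.
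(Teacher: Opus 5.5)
Your proof is correct, and it takes a genuinely different route from the paper. The paper never uses the mixture formula for $\nu_Y$. It first notes that $N_{\Lambda_t}$ has exponential moments because $\Lambda_t$ does. It then applies Schmidli's random-sum theorem to the mixed-Poisson sum $Y_t=\sum_{i\le N_{\Lambda_t}}X_i$, which gives $Y_t\in\mathcal{S}$ and $\mathbb{P}[Y_t>x]\sim\lambda t\,\mathbb{P}[X>x]$. Finally it applies Theorem A3.19 of Embrechts--Kl\"uppelberg--Mikosch to the representation $Y_t=\sum_{j\le\tilde N_t}Z_j$; this theorem says a subexponential infinitely divisible law is tail-equivalent to its L\'evy measure, so $\mathbb{P}[Y_t>x]\sim t\psi_\Lambda(\lambda)\mathbb{P}[Z>x]$, and dividing the two asymptotics gives the constant.

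You instead write $\overline{F_Z}$ as a $\nu_\Lambda$-mixture of the Poisson-sum tails $\mathbb{P}[C_t>x]$. You need only the fixed-$t$ Poisson random-sum asymptotics, and you swap limit and integral using Kesten's bound and dominated convergence. This is more elementary and self-contained: it needs neither mixed random sums nor the infinite-divisibility theorem. It also shows exactly where each hypothesis enters. Integrability of $t$ against $\nu_\Lambda$ controls the small jumps of $\Lambda$, and light-tailedness controls the large ones through $\int_{(1,\infty)}e^{st}\nu_\Lambda(dt)<\infty$. Time normalization only fixes the constant; without it you would get $\lambda\mathbb{E}[\Lambda_1]/\psi_\Lambda(\lambda)$. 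The paper's route is shorter given its black boxes and yields $Y_t\in\mathcal{S}$ as a by-product.

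There is one computational slip to fix. Since $\sum_{n\ge1}(\lambda(1+\varepsilon)t)^n/n!=e^{\lambda(1+\varepsilon)t}-1$, the dominating function is $K_\varepsilon\bigl(e^{\lambda\varepsilon t}-e^{-\lambda t}\bigr)$, not $K_\varepsilon\bigl(e^{\lambda\varepsilon t}-1\bigr)$. Your stated bound is not justified near $t=0$: it behaves like $K_\varepsilon\lambda\varepsilon t$, while the ratio itself is at least $\lambda t e^{-\lambda t}$. The corrected function satisfies $K_\varepsilon\bigl(e^{\lambda\varepsilon t}-e^{-\lambda t}\bigr)\le K_\varepsilon\lambda(1+\varepsilon)t\,e^{\lambda\varepsilon t}$ and also $\le K_\varepsilon e^{\lambda\varepsilon t}$. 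So it is still $O(t)$ near $0$ and $\nu_\Lambda$-integrable at infinity once $\lambda\varepsilon<s$, and the rest of your argument is unchanged.
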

\begin{proof}
Let $t>0$, $\varepsilon>0$. We calculate the expectation
    \begin{equation*} 
        \mathbb{E}[e^{\varepsilon N_{\Lambda_t}}]=\mathbb{E}[\mathbb{E}[e^{\varepsilon N_{\Lambda_t}}|\sigma(\Lambda_t)]].
    \end{equation*}
Since $N$ is a standard Poisson process we get
\begin{equation}\label{momenterzeugende}
    =\mathbb{E}[e^{\lambda\Lambda_t(e^{\varepsilon}-1)}].
\end{equation}
Since $\Lambda_t$ is light-tailed for any $t\geq 0$, we can choose
$\varepsilon>0$ sufficiently small so that the expression in
\eqref{momenterzeugende} is finite. Hence $N_{\Lambda_t}$ is light-tailed for any $t\geq0$.
\\
By Theorem 1 in \citet{schmidli1999} it follows that for any $t>0$, the distribution of $Y_t:=\sum_{i=1}^{N_{\Lambda_t}}X_i$ belongs to $\mathcal{S}$.
\\
Moreover by \citet{schmidli1999}, for $x>0,$
\begin{equation}\label{eq1}
    \mathbb{P}[Y_t>x]=\mathbb{P}[\sum_{i=1}^{N_{\Lambda_t}}X_i>x]\sim \lambda t\mathbb{P}[X>x].
\end{equation}
By \eqref{equaldistr} and Theorem A.3.19. in \citet{ekm2013} it also follows that
\begin{equation}\label{eq2}
    \frac{\mathbb{P}[Y_t>x]}{\mathbb{P}[Z>x]}\sim\psi_{\Lambda}(\lambda)t.
\end{equation}
Combining \eqref{eq1} and \eqref{eq2}, we obtain
\begin{equation}\label{TailasymptotikZX}
    \mathbb{P}[Z>x]\sim\frac{\lambda}{\psi_{\Lambda}(\lambda)}\mathbb{P}[X>x].
\end{equation}
By $X\in\mathcal{S}$, \eqref{TailasymptotikZX}, and Lemma A3.15 in \citet{ekm2013} it follows that $Z\in\mathcal{S}$,
which proves the statement.
\end{proof}
\noindent
Equation \eqref{TailasymptotikZX} admits a very natural interpretation: on average, $\frac{\lambda}{\psi_{\Lambda}(\lambda)}$ claims of the original compound Poisson process with jump size $X$ are aggregated into a single jump in the subordinated compound Poisson process, whose jump size is $Z$. 
\subsection{Regularly Varying Subordinator}
In this subsection we use a Lévy subordinator $\Lambda$, such that $\overline{F_{\Lambda_1}}$ is regularly varying. Due to time normalization, it is necessary that the index $-\rho$ of the regularly varying tail is smaller than $-1$ to ensure that the expected value of $\Lambda_1$ exists. We formulate the following.
\begin{theorem}
    Let $\Lambda$ be a time-normalized Lévy subordinator, s.t. $\overline{F_{\Lambda_1}}$ is regularly varying with index $-\rho<-1$, i.e. $\overline{F_{\Lambda_1}}(x)\sim x^{-\rho}L(x)$, for some slowly varying $L$. Furthermore let $\overline{F_X}(x)\in o(x^{-\rho}L(x))$, then $\overline{F_Z}$ is regularly varying with index $-\rho$.
\end{theorem}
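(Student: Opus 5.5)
The plan is to pass from $\Lambda_1$ to $Y_1$ and then from $Y_1$ to $Z$, using the compound Poisson representation \eqref{equaldistr}.

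\textbf{Step 1: tail of $N_{\Lambda_1}$.} Conditioning on $\Lambda_1$, $N_{\Lambda_1}$ is a mixed Poisson variable with mixing distribution $F_{\Lambda_1}$. Since $\overline{F_{\Lambda_1}}$ is regularly varying with index $-\rho$, the standard result on mixed Poisson distributions (e.g.\ Grandell's mixed Poisson monograph) gives $\mathbb{P}[N_{\Lambda_1}>n]\sim \overline{F_{\Lambda_1}}(n/\lambda)\sim \lambda^{\rho}n^{-\rho}L(n)$. Heuristically, $N_{\Lambda_1}/\Lambda_1\to\lambda$ by the law of large numbers, and Poisson concentration makes the discrepancy negligible compared with a regularly varying tail.

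\textbf{Step 2: tail of $Y_1$.} Here I would use a random-sum result for regularly varying counting variables, e.g.\ Fay, González-Arévalo, Mikosch and Samorodnitsky (2006) or Robert--Segers. If $\mathbb{E}[X]<\infty$, $\mathbb{P}[N>n]$ is regularly varying with index $-\rho$, and $\overline{F_X}(x)=o(\mathbb{P}[N>x])$, then
\[
\mathbb{P}\Bigl[\textstyle\sum_{i=1}^{N}X_i>x\Bigr]\sim \mathbb{P}\bigl[N>x/\mathbb{E}[X]\bigr].
\]
Thus $\mathbb{P}[Y_1>x]\sim (\lambda\mathbb{E}[X])^{\rho}x^{-\rho}L(x)$, which is regularly varying with index $-\rho$.

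Alternatively, I can argue directly. Condition on $\Lambda_1=s$. For the lower bound, use the law of large numbers for $C_s$ on the event $\Lambda_1>(1+\delta)x/(\lambda\mathbb{E}X)$. For the upper bound, split on $\Lambda_1\le(1-\delta)x/(\lambda\mathbb{E}X)$ and control $\mathbb{P}[C_s>x]$ there by a large-deviation bound for heavy-tailed random walks. That bound is of order $s\,\overline{F_X}(\delta' x)$ plus an exponentially small term. Integrating against $F_{\Lambda_1}$ gives $\mathbb{E}[\Lambda_1]\,\overline{F_X}(\delta' x)=o(x^{-\rho}L(x))$ by hypothesis. Finally let $\delta\to0$ using regular variation.

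\textbf{Step 3: transfer to $Z$.} $Y_1$ is compound Poisson with intensity $\psi_\Lambda(\lambda)$ and jump law $Z$. The class of regularly varying tails with index $-\rho$ is subexponential, and the converse transfer for compound Poisson sums holds: if the tail of a compound Poisson sum is regularly varying, so is the jump tail, with $\mathbb{P}[Y_1>x]\sim\psi_\Lambda(\lambda)\,\mathbb{P}[Z>x]$. This is the Embrechts--Goldie--Veraverbeke result, cf.\ Theorem A3.19/A3.20 in \citet{ekm2013}, as already used in the previous theorem. It yields
\[
\overline{F_Z}(x)\sim \frac{(\lambda\mathbb{E}[X])^{\rho}}{\psi_\Lambda(\lambda)}\,x^{-\rho}L(x).
\]

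An alternative for Step 3 avoids the inversion by working directly with $\nu_Y$ from \eqref{YLevymeasure}. One writes $\overline{F_Z}(x)\psi_\Lambda(\lambda)=b_\Lambda\lambda\overline{F_X}(x)+\int\mu^t((x,\infty))\nu_\Lambda(dt)$. The first term is $o$ of the target by hypothesis. For the second, one uses that $\nu_\Lambda((y,\infty))\sim\overline{F_{\Lambda_1}}(y)$, since infinitely divisible laws with subexponential tails have Lévy tails equivalent to the distribution tail. Then Step 2's conditioning argument applies with $\nu_\Lambda$ in place of $F_{\Lambda_1}$, using $\int t\,\nu_\Lambda(dt)\le 1$.

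\textbf{Main obstacle.} The hardest part is the upper bound in Step 2. It needs uniform control of $\mathbb{P}[C_s>x]$ for $s$ up to order $x$ when $X$ is merely $o(x^{-\rho}L(x))$-tailed and possibly not regularly varying. I would first try the large-deviation bound $\mathbb{P}[C_s-\lambda s\mathbb{E}X>\delta x]\le K s\,\overline{F_X}(\delta x)+e^{-cx}$ after truncating $X$ at level $\delta x$. If that is too delicate, I would simply invoke the cited random-sum theorem, which already covers this case.
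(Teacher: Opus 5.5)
Your proposal is correct and follows the paper's proof step for step: the paper also treats $N_{\Lambda_1}$ as mixed Poisson and uses Proposition~8.4 and Corollary~8.5 of \citet{grandell1997} to get $\mathbb{P}[N_{\Lambda_1}>z]\sim\lambda^{\rho}z^{-\rho}L(z)$, then a random-sum theorem (it cites Theorem~1.3 of \citet{stam1973} rather than Fay et al.\ or Robert--Segers) to get $\mathbb{P}[Y_1>z]\sim(\lambda\mathbb{E}[X])^{\rho}z^{-\rho}L(z)$, and finally Theorem~A3.19 of \citet{ekm2013} to pass to $Z$ with the same constant $(\lambda\mathbb{E}[X])^{\rho}/\psi_\Lambda(\lambda)$. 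One small correction to your optional direct argument, which you do not need: the truncation level must be a small fraction of $\delta x$ (truncating at $\delta x$ itself leaves a remainder of order $x^{-(\rho-1)}$, which is too large), and the resulting remainder is only polynomially small of sufficiently high order, not exponentially small, though that is still $o(x^{-\rho}L(x))$.
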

\begin{proof}
    It holds that for $x\in\mathbb{N}$
    \begin{equation*}
        \mathbb{P}[N_{\Lambda_1}=x]=\int_{0}^{\infty}\mathbb{P}[N_r=x]dF_{\Lambda_1}(r)=\int_{0}^{\infty}\frac{(\lambda r)^x}{x!}e^{-\lambda r}dF_{\Lambda_1}(r).
    \end{equation*}
    Substituting $z:=\lambda r$, we get
    \begin{equation*}
        =\int_{0}^{\infty}\frac{z^{x}}{x!}e^{-z}dF_{\Lambda_1}(\frac{z}{\lambda})
    \end{equation*}
    for any $x\in\mathbb{N}$. Therefore $N_{\Lambda_1}$ is mixed Poisson distributed. Defining $\mathcal{L}(z):=\lambda^{\rho}L(z)$, by slow variation of $L$
    \begin{equation*}
        \overline{F_{\Lambda_1}}(\frac{z}{\lambda})\sim z^{-\rho}\lambda^{\rho}L(\frac{z}{\lambda})\sim z^{-\rho}\mathcal{L}(z),\quad z\to\infty.
    \end{equation*}
    By Proposition 8.4 and Corollary 8.5 in \citet{grandell1997}, it follows that
    \begin{equation*}
        \mathbb{P}[N_{\Lambda_1}>z]\sim\overline{F_{\Lambda_1}}(\frac{z}{\lambda})\sim z^{-\rho}\mathcal{L}(z),\quad z\to\infty.
    \end{equation*}
    Since $\overline{F_X}(x)\in o(x^{-\rho}L(x))$ it holds per definition that
    \begin{equation*}
        \lim_{x\to\infty}\overline{F_X}(x)x^{\rho}\frac{1}{L(x)}=0.
    \end{equation*}
    Furthermore we know
    \begin{equation*}
        \lim_{x\to\infty}\mathbb{P}[N_{\Lambda_1}>x]x^{\rho}\frac{1}{L(x)}=\lim_{x\to\infty}x^{-\rho}\mathcal{L}(x)x^{\rho}\frac{1}{L(x)}=\lambda^{\rho}.
    \end{equation*}
    By Theorem 1.3. in \citet{stam1973},
    \begin{equation*}
        \lim_{z\to\infty}\mathbb{P}[Y_1>z]z^{\rho}\frac{1}{L(z)}=(\lambda\mathbb{E}[X])^{\rho},
    \end{equation*}
    which shows that
    \begin{equation*}
        \mathbb{P}[Y_1>z]\sim (\lambda\mathbb{E}[X])^{\rho}z^{-\rho}L(z).
    \end{equation*}
    We conclude by Theorem A3.19. in \citet{ekm2013}, that
    \begin{equation*}
        \overline{F_Z}(x)\sim \frac{1}{\psi_{\Lambda}(\lambda)}(\lambda\mathbb{E}[X])^\rho z^{-\rho}L(z).
    \end{equation*}
\end{proof}
\noindent
In the context of the Cramér-Lundberg model and ruin probabilities, the case of regularly varying tails of the claim size distribution is very comfortable. By applying Karamata´s Theorem, the asymptotics of the ruin probability can be obtained explicitly.
\begin{corollary}
    Let $\Lambda$ be a time-normalized Lévy subordinator, s.t. $\overline{F_{\Lambda_1}}(x)\sim x^{-\rho}L(x)$ for a slowly varying $L$ and $-\rho<-1$. Furthermore $\overline{F_X}(x)\in o(x^{-\rho}L(x))$. Consider the subordinated Cramér-Lundberg process
    \begin{equation*}
        P_t=u+ct-\sum_{k=1}^{N_{\Lambda_t}}X_k,
    \end{equation*}
    then the ruin probability of $P_t$ is asymptotically given by
    \begin{equation*}
        \Psi_P(u)\sim\frac{1}{c-\lambda\mathbb{E}[X]}\frac{1}{\rho-1}(\lambda\mathbb{E}[X])^{\rho} u^{-\rho+1}L(u),\quad u\to\infty.
    \end{equation*}
\end{corollary}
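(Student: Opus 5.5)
The plan is to combine the preceding theorem with the classical asymptotics for the Cramér–Lundberg model with subexponential integrated tails. By Theorem \ref{SubCPP} and \eqref{equaldistr}, the surplus process $P$ is a classical Cramér–Lundberg process with intensity $\psi_{\Lambda}(\lambda)$ and claim sizes distributed as $Z$. The net profit condition reads $c>\psi_{\Lambda}(\lambda)\mathbb{E}[Z]=\lambda\mathbb{E}[X]$. The standard result I will invoke is Embrechts–Veraverbeke, e.g.\ Theorem 1.3.6 in \citet{ekm2013} or the corresponding statement in \citet{schmidli2017}. It says that if the integrated tail distribution
\begin{equation*}
F_{Z,I}(x)=\frac{1}{\mathbb{E}[Z]}\int_0^x\overline{F_Z}(y)\,dy
\end{equation*}
is subexponential, then
\begin{equation*}
\Psi_P(u)\sim \frac{\psi_\Lambda(\lambda)\mathbb{E}[Z]}{c-\psi_\Lambda(\lambda)\mathbb{E}[Z]}\,\overline{F_{Z,I}}(u)=\frac{\psi_\Lambda(\lambda)}{c-\lambda\mathbb{E}[X]}\int_u^\infty\overline{F_Z}(y)\,dy .
\end{equation*}

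The first step is to compute the integrated tail. The preceding theorem gives $\overline{F_Z}(y)\sim \frac{1}{\psi_{\Lambda}(\lambda)}(\lambda\mathbb{E}[X])^{\rho}y^{-\rho}L(y)$ with $\rho>1$. Karamata's theorem for regularly varying functions with index $-\rho<-1$ then yields
\begin{equation*}
\int_u^\infty\overline{F_Z}(y)\,dy\sim \frac{1}{\psi_{\Lambda}(\lambda)}(\lambda\mathbb{E}[X])^{\rho}\frac{u^{-\rho+1}L(u)}{\rho-1},\quad u\to\infty .
\end{equation*}
Karamata applies to the asymptotic equivalent, and asymptotic equivalence is preserved under tail integration because the integrals converge. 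In particular, $\overline{F_{Z,I}}$ is regularly varying with index $-(\rho-1)<0$. By the fact recalled earlier from \citet{ekm2013}, this means $F_{Z,I}\in\mathcal{S}$.

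The second step substitutes this into the Embrechts–Veraverbeke asymptotics. The factor $\psi_{\Lambda}(\lambda)$ cancels, leaving exactly
\begin{equation*}
\Psi_P(u)\sim\frac{1}{c-\lambda\mathbb{E}[X]}\frac{1}{\rho-1}(\lambda\mathbb{E}[X])^{\rho}u^{-\rho+1}L(u).
\end{equation*}

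The main point requiring care is ensuring that $\mathbb{E}[Z]<\infty$, so that the integrated tail is well defined, and that the subexponentiality hypothesis is checked on $F_{Z,I}$ rather than on $F_Z$. The first holds since $\rho>1$, or directly since $\psi_\Lambda(\lambda)\mathbb{E}[Z]=\lambda\mathbb{E}[X]$ by time normalization. The second is handled by the regular variation of the integrated tail. I would also state explicitly that the constant from the preceding theorem, which carries $L$ rather than $\mathcal{L}$, is the one being integrated.
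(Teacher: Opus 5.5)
Your proposal is correct and follows essentially the same route as the paper: Karamata's theorem applied to the tail $\overline{F_Z}$ from the preceding theorem, then the Embrechts--Veraverbeke asymptotics (Theorem 1.3.6 in the cited reference), where the identity $\psi_{\Lambda}(\lambda)\mathbb{E}[Z]=\lambda\mathbb{E}[X]$ makes $\psi_{\Lambda}(\lambda)$ cancel. Your explicit checks that $\mathbb{E}[Z]<\infty$ and that $F_{Z,I}\in\mathcal{S}$ (because its tail is regularly varying with index $-(\rho-1)<0$) supply hypotheses that the paper's proof uses without verifying.
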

\begin{proof}
    Since $\overline{F_{\Lambda_1}}$ is regularly varying with index $-\rho<-1$, we know from the previous theorem, that
    \begin{equation*}
    \overline{F_Z}(z)\sim \frac{1}{\psi_{\Lambda}(\lambda)}(\lambda\mathbb{E}[X])^{\rho}z^{-\rho}L(z)
    \end{equation*}
    for $z\to\infty$.
    \\
    The integrated tail of $\overline{F_Z}$ is defined by
    \begin{equation*}
        \overline{F_{I}^{Z}}(x):=\frac{1}{\mathbb{E}[Z]}\int_{x}^{\infty}\overline{F_Z}(y)dy.
    \end{equation*}
    By applying the Karamata's Theorem, we obtain due to the regular variation of $\overline{F_{Z}}$,
    \begin{equation*}
        \frac{1}{\mathbb{E}[Z]}\int_x^{\infty}\overline{F_{Z}}(y)dy\sim\frac{1}{\rho-1}\frac{1}{\psi_{\Lambda}(\lambda)}(\lambda\mathbb{E}[X])^{\rho}\frac{1}{\mathbb{E}[Z]}x^{-\rho+1}L(x)=\frac{1}{\rho-1}{(\lambda\mathbb{E}[X])^{\rho-1}}x^{-\rho+1}L(x).
    \end{equation*}
    Applying Theorem 1.3.6 in \citet{ekm2013}, we obtain
    \begin{align}
        \psi_P(u)&\sim \frac{\lambda\mathbb{E}[X]}{c-\lambda\mathbb{E}[X]}\frac{1}{\rho-1}(\lambda\mathbb{E}[X])^{\rho-1} u^{-\rho+1}L(u)\\&=\frac{1}{c-\lambda\mathbb{E}[X]}\frac{1}{\rho-1}(\lambda\mathbb{E}[X])^{\rho}u^{-\rho+1}L(u),\quad u\to\infty.
    \end{align}
\end{proof}
\noindent
This result particularly striking. According to the preceding theorem, regular variation is inherited from the subordinator to the jump size $Z$ of the subordinated process. Consequently, the asymptotic behaviour of the insurer's ruin probability can be quantified using Karamata's Theorem. This represents a risk that an insurer cannot afford to ignore. 
\\
\\
As noted earlier, time-normalized subordination is solely a random change in the timing of loss occurrences and does not alter the expected value. Nevertheless, the probability of ruin can be seriously affected by the clustering of losses (as in Nat-Cat events), even if the individual losses appear very moderate. The following example illustrates this.
\noindent
\begin{example}
Let $C_t:=\sum_{i=1}^{N_t}X_i$ be a compound Poisson process with $X_1\sim Exp(1)$ and $\mathbb{E}[N_1]=1$. For the subordinator $\Lambda$ we define
\begin{equation*}
    \Lambda_t:=0.9t+\sum_{i=1}^{\bar{N_t}}K_i,
\end{equation*}
where $\bar{N}$ is a Poisson process with intensity $1$ and $(K_j)_{j\in\mathbb{N}}$ are i.i.d. positive random variables with $K_1\sim Par(\frac{0.1\varepsilon}{1+\varepsilon},1+\varepsilon)$ for $\varepsilon>0$. Here, the first parameter denotes the scale parameter of the Pareto distribution, while the second parameter represents the shape parameter. The parameters are chosen in such a way, that $\Lambda$ fulfills the time normalization assumption, i.e. $\mathbb{E}[\Lambda_1]=1$.
\\
We compare the ruin probabilities of
\begin{equation*}
    S_t:=u+ct-\sum_{i=1}^{N_t}X_i
\end{equation*}
and
\begin{equation*}
    P_t:=u+ct-\sum_{i=1}^{N_{\Lambda_t}}X_i.
\end{equation*}
\\
We know by Propostition 1.5 in section X.1 and by Lemma 2.2 in section X.2 in \citet{asmussenalbrecher}, that
\begin{equation*}
    \mathbb{P}[\Lambda_1>x]=\mathbb{P}[0.9+\sum_{i=1}^{\bar{N_t}}K_i>x]\sim \mathbb {P}[\sum_{i=1}^{\bar{N_t}}K_i>x]\sim \mathbb{P}[K_1>x]=\left (\frac{0.1\varepsilon}{(1+\epsilon)x}\right )^{1+\varepsilon},
\end{equation*}
for $x\to\infty$.
Therefore $\Lambda_1$ is regularly varying with index $-(1+\varepsilon)$. By the previous corollary, we have
\begin{equation*}
\Psi_{P}(u)\sim\frac{1}{c-1}\frac{1}{\varepsilon}u^{-\varepsilon}\left(\frac{0.1\varepsilon}{1+\varepsilon}\right)^{1+\varepsilon}.
\end{equation*}
Moreover for $S$, standard Cramér-Lundberg theory yields
\begin{equation*}
    \Psi_{S}(u)\sim\frac{1}{c}e^{-(1-\frac{1}{c})u}.
\end{equation*}
Note that $\mathbb{E}[S_t]=\mathbb{E}[P_t]$ for all $t\geq0$. Nevertheless, asymptotic behaviour of the ruin probabilities differs dramatically. For $\varepsilon>0$ small, $\Psi_{P}(u)$ exhibits polynomial decay that can be made arbitrarily slow, whereas $\Psi_{S}(u)$ always decays exponentially fast.
\end{example}
\newpage
\section{Concluding Remarks}
In this paper, we have developed a model capable of incorporating cumulative losses, such as natural catastrophe events, into an insurance portfolio consisting of individual claims. The insurer's initial loss process was modeled by a compound Poisson process. Nat-Cat losses were then introduced by means of a stochastic time change, which was given by a Lévy subordinator.
The Lévy subordinator was chosen so that it runs as fast as calendar time on average. This time normalization enabled a meaningful quantitative comparison between the subordinated and the initial portfolio, since under this condition the subordination does not alter the expectation of the risk process. Moreover, time normalization ensured that subordination represents purely a random distortion of the claim occurrence times.
Nevertheless, we have shown, that this change in the timing has a substantial impact on the asymptotic behaviour of the ruin probability.
\\
\\
We proved that the insurer's time-changed claims process remains a compound Poisson process and derived conditions under which the subordinated process exhibits light-tailed or heavy-tailed jump distributions. We were able to quantify the risk of ruin in both cases. Moreover we have shown that subordination is always bad for the insurer in the case of light tails. In the case of heavy tails our analysis mainly focused on the jump distributions with regularly varying tails. We established conditions under which the subordinated process inherits regular variation and using Karamata's theorem, obtained asymptotic results for the ruin probability in this case. We have shown that also in the case of regular variation, under very mild conditions, the effect of subordination on the ruin probability is negative.
\\
\\
As shown in the last section, if we start with an initial model whose jump distribution is light-tailed and apply a time-normalized Lévy subordinator with regularly varying tails, the ruin probability decays only polynomially. Depending on the choice of subordinator, this polynomial decay can be made arbitrarily slow. It is important to note that, without subordination, the initial model always exhibits an exponential decay of the ruin probability. The dramatic difference in the asymptotic behaviour arises from the fact that, in the subordinated model, losses occur in clusters rather than being evenly spread over the insurance horizon, as in the initial model.
\newpage
\printbibliography
\end{document}